\numberwithin{equation}{section}
\newtheorem{Theorem}{Theorem}[section]
\newtheorem{Lemma}[Theorem]{Lemma}
 { \theoremstyle{definition}
\newtheorem{Example}[Theorem]{Example} }
\renewcommand{\mod}{\mathbin{\mathrm{mod}}}
\DeclareMathOperator{\rem}{rem}
\DeclareMathOperator{\diag}{diag}
\begin{document}
\allowdisplaybreaks

\newcommand{\arXivNumber}{1811.09096}

\renewcommand{\PaperNumber}{045}

\FirstPageHeading

\ShortArticleName{Lax Representations for Separable Systems from Benenti Class}

\ArticleName{Lax Representations for Separable Systems\\ from Benenti Class}

\Author{Maciej B{\L}ASZAK~$^\dag$ and Ziemowit DOMA\'NSKI~$^\ddag$}

\AuthorNameForHeading{M.~B{\l}aszak and Z.~Doma\'nski}

\Address{$^\dag$~Faculty of Physics, Division of Mathematical Physics, A.~Mickiewicz University,\\
\hphantom{$^\dag$}~Uniwersytetu Pozna\'{n}skiego 2, 61-614 Pozna\'{n}, Poland}
\EmailD{\href{mailto:blaszakm@amu.edu.pl}{blaszakm@amu.edu.pl}}

\Address{$^\ddag$~Institute of Mathematics, Pozna{\'n} University of Technology, \\
\hphantom{$^\ddag$}~Piotrowo 3A, 60-965 Pozna{\'n}, Poland}
\EmailD{\href{mailto:ziemowit.domanski@put.poznan.pl}{ziemowit.domanski@put.poznan.pl}}

\ArticleDates{Received December 13, 2018, in final form June 07, 2019; Published online June 18, 2019}

\Abstract{In this paper we construct Lax pairs for St\"ackel systems with separation curves from so-called Benenti class. For each system of considered family we present an infinite family of Lax representations, parameterized by smooth functions of spectral parameter.}

\Keywords{Lax representation; St\"ackel system; Benenti system; Hamiltonian mechanics}

\Classification{70H06; 37J35}

\section{Introduction}\label{sec1}
Classical St\"{a}ckel systems belong to important class of integrable and separable Hamiltonian ODE's. The constants of motion of these systems are quadratic in momenta and describe many physical systems of classical mechanics. The St\"{a}ckel systems are defined by separation relations, i.e., relations involving canonical variables $(\lambda_i ,\mu_i)_{i=1,\dots,n}$ in which the Hamilton--Jacobi equations separate to a system of decoupled ordinary differential equations. The separation relations of classical St\"{a}ckel system are represented by $n$ algebraic equations of the form
\begin{gather}
\sigma _{i}(\lambda _{i})+\sum_{k=1}^{n}H_{k}S_{ik}(\lambda _{i})=\frac{1}{2}%
f_{i}(\lambda _{i})\mu _{i}^{2},\qquad i=1,2,\dots ,n, \label{eq1}
\end{gather}
where $n$ is the number of degrees of freedom of the system, i.e., $2n$ is
the dimension of the corresponding phase space on which the system is
defined, $H_{1},H_{2},\dots ,H_{n}$ are $n$ Hamiltonians, $f$, $\sigma $
and $S$ are arbitrary smooth functions. Solving the system~(\ref{eq1}), under
assumption that $\det S\neq 0$, with respect to all $H_{i}$ we get $n$
Hamiltonians expressed in variables $(\lambda_i ,\mu_i )_{i=1,\dots,n}$, which
from construction will be in involution, i.e., their Poisson brackets vanish
$\{H_{i},H_{j}\}=0$, and which Hamilton--Jacobi equations separate. In other
words the system~(\ref{eq1}) describes a~Liouville-integrable and separable
Hamiltonian system.

The special, but particularly important class of St\"{a}ckel systems is the
Benenti class \cite{Benenti1992,Benenti1997,Benenti2005}. This class is
described by the following separation relations
\begin{gather}
\sigma _{i}(\lambda _{i})+H_{1}\lambda _{i}^{n-1}+H_{2}\lambda
_{i}^{n-2}+\dotsb +H_{n}=\frac{1}{2}f_i(\lambda _{i})\mu _{i}^{2},\qquad
i=1,2,\dots ,n. \label{eq2}
\end{gather}

There is an extended literature on systems from class (\ref{eq2}), however
less can be found on their Lax representation. A Lax representation of a
Liouville integrable Hamiltonian system is a set of matrices~$L$, $U_{k}$ ($k=1,2,\dots ,n$) which satisfy the system of Lax equations
\begin{gather}
\frac{{\rm d}L}{{\rm d}t_{k}} = [U_{k},L],\qquad k=1,2,\dots,n, \label{eq3}
\end{gather}%
under the assumption that the time evolution with respect to $t_{k}$ is
governed by the Hamiltonian vector field $X_{H_{k}}$.

In the paper we find an infinite family of Lax representations for an
arbitrary St\"{a}ckel system from the Benenti class generated by the
following canonical form of separation curves
\begin{gather}
\sigma (\lambda )+\sum_{k=1}^{n}H_{k}\lambda ^{n-k}=\frac{1}{2}f(\lambda
)\mu ^{2}, \label{eq4}
\end{gather}%
where the spectral parameter $\lambda$ is real and the functions $f(\lambda)$
and $\sigma(\lambda)$ are smooth and real valued. Separation relations (\ref%
{eq2}) are reconstructed by $n$ copies of (\ref{eq4}) with $(\lambda _{i},\mu
_{i})_{i=1,\dots ,n}$. In literature, the reader can find Lax
representation for systems related to various subcases of separation curves
from the family~(\ref{eq4}). First constructions of Lax representation for
separable systems with separation curves of particular hyperelliptic form
was constructed by Mumford~\cite{Mumford1984}. For~$\sigma (\lambda )$ in polynomial
form and $f(\lambda )=\lambda ^{2r}$, $r\in \mathbb{N}$, Lax equations were
constructed expli\-citly in~\cite{Vanhaecke2001} and analyzed in particular coordinate
frames. In~\cite{Eilbeck1994}, using different technique, Lax representation was
constructed for the subclass of separation curves (\ref{eq4}) with $f(\lambda
) $ being polynomials of order $n+1$, $n$ and $n-1$, respectively, with
distinct roots. Yet another subcases of family (\ref{eq4}) together with the
construction of Lax representations by various techniques, the reader can
find in~\cite{Rauch-Wojciechowski1996,Tsiganov1999}.

Here we present the explicit form of infinite family of admissible Lax
representations, for systems generated by~(\ref{eq4}) with $\sigma (\lambda )$
and $f(\lambda )$ being arbitrary smooth real valued functions, in
separation coordinates and in so called Vi\`{e}te coordinates.

The paper is organized as follows. In Section~\ref{sec2} we present basic facts about Benenti systems. In Section~\ref{sec3} we present in explicit form the infinite family of Lax matrices~$L(\lambda )$ and Lax equations~(\ref{eq3}) in separation coordinates and
state that the characteristic equation of each Lax matrix corresponds to the
separation curve~(\ref{eq4}) of Benenti system. All results of this section we
gather in Theorem~\ref{t1}, which we prove in Section~\ref{sec4}. Section~\ref{sec5} contains the particular Lax representations in Vi\`{e}te
coordinates in which many formulas simplify. Section~\ref{sec6} contains several examples illustrating the theory. We end the article with
some comments concerning the main result and its further application.

\section{Preliminaries}\label{sec2}
Let us consider separable systems generated by separation
curves for canonical coordinates in the form~(\ref{eq4}). The Poisson bracket
in canonical coordinates $(\lambda _{i},\mu _{i})_{i=1,\dots ,n}$ takes the canonical form
\begin{gather*}
\{f,g\}=\sum_{i=1}^{n}\left( \frac{\partial f}{\partial \lambda _{i}}\frac{\partial g}{\partial \mu _{i}}-\frac{\partial f}{\partial \mu _{i}}\frac{\partial g}{\partial \lambda _{i}}\right) .
\end{gather*}
In particular
\begin{gather*}
\{\lambda _{i},\mu _{j}\}=\delta _{ij},\qquad \{\lambda _{i},\lambda
_{j}\}=\{\mu _{i},\mu _{j}\}=0.
\end{gather*}
By taking $n$ copies of (\ref{eq4}) with $(\lambda ,\mu )=(\lambda _{i},\mu_{i})$, $i=1,\dots ,n$, we get a system of $n$ linear equations for $H_{k}$:
\begin{gather*}
H_{1}\lambda _{1}^{n-1}+H_{2}\lambda _{1}^{n-2}+\dotsb +H_{n} =F(\lambda _{1},\mu _{1}), \\
H_{1}\lambda _{2}^{n-1}+H_{2}\lambda _{2}^{n-2}+\dotsb +H_{n} =F(\lambda _{2},\mu _{2}), \\
\cdots \cdots\cdots\cdots\cdots\cdots\cdots\cdots\cdots\cdots\cdots\cdots\cdots\cdots \\
H_{1}\lambda _{n}^{n-1}+H_{2}\lambda _{n}^{n-2}+\dotsb +H_{n} =F(\lambda _{n},\mu _{n}),
\end{gather*}
where $F(x,y)=\frac{1}{2}f(x)y^{2}-\sigma (x)$. Its solution gives us $n$ real valued Hamiltonians
\begin{gather}
H_{k}(\lambda ,\mu )=E_{k}(\lambda ,\mu )+V_{k}(\lambda )=-\sum_{i=1}^{n}
\frac{\partial \rho _{k}}{\partial \lambda _{i}}\frac{F(\lambda _{i},\mu_{i})}{\Delta _{i}}, \label{eq5}
\end{gather}
and related evolution equations
\begin{gather}
\frac{{\rm d}\lambda_i}{{\rm d}t_{k}} = \{\lambda_i,H_k\}, \qquad \frac{{\rm d}\mu_i}{{\rm d}t_{k}} = \{\mu_i,H_k\}, \qquad i,k = 1,\dots,n, \label{eq6}
\end{gather}
where
\begin{gather*}
\rho _{k}=(-1)^{k}s_{k},\qquad \Delta _{i}=\prod\limits_{k\neq i}(\lambda _{i}-\lambda _{k}),
\end{gather*}%
and $s_{k}$ are elementary symmetric polynomials. From the linearity of (\ref{eq2}), geodesic parts $E_{k}$ are defined by
\begin{gather*}
\sum_{j=1}^{n}E_{j}\lambda _{i}^{n-j}=\frac{1}{2}f(\lambda _{i})\mu_{i}^{2},\qquad i=1,\dots ,n,
\end{gather*}%
and in canonical coordinates $(\lambda _{i},\mu _{i})_{i=1,\dots ,n}$ take the form \cite{Blaszak2005}
\begin{gather}
E_{j}(\lambda ,\mu )=-\frac{1}{2}\sum_{i=1}^{n}\frac{\partial \rho _{j}}{%
\partial \lambda _{i}}\frac{f(\lambda _{i})\mu _{i}^{2}}{\Delta _{i}}=\frac{1}{2}\sum_{i=1}^{n}\left( K_{j}G\right) ^{ii}\mu _{i}^{2}, \label{eq7}
\end{gather}%
where $G$ is contravariant metric tensor defined by $E_{1}$ and $K_{j}$ are
Killing tensors of $G$:
\begin{gather*}
G^{rs}=\frac{f(\lambda _{r})}{\Delta _{r}}\delta ^{rs},\qquad (K_{j})_{s}^{r}=-\frac{\partial \rho _{j}}{\partial \lambda _{r}}\delta_{s}^{r}.
\end{gather*}%
Potentials $V_{k}$, defined by
\begin{gather}
\sigma (\lambda _{i})+\sum_{j=1}^{n}V_{j}\lambda _{i}^{n-j}=0,\qquad i=1,\dots ,n \label{eq8}
\end{gather}%
take the form
\begin{gather*}
V_{k}(\lambda )=\sum_{i=1}^{n}\frac{\partial \rho _{k}}{\partial \lambda _{i}}\frac{\sigma (\lambda _{i})}{\Delta _{i}}.
\end{gather*}%
In particular, for $\sigma (\lambda )=\lambda ^{\gamma }$, $\gamma \in \mathbb{Z}$, the corresponding potentials $V_{k}^{(\gamma )}$, called basic
potentials, are constructed by the formula~\cite{Blaszak2011}
\begin{gather*}
V^{(\gamma )}=R^{\gamma }V^{(0)},\qquad V^{(\gamma )}=\big(V_{1}^{(\gamma)},\dots ,V_{n}^{(\gamma )}\big)^{\rm T},
\end{gather*}
where
\begin{gather*}
R=
\begin{pmatrix}
-\rho _{1} & 1 & 0 & 0 \\
\vdots & 0 & \ddots & 0 \\
\vdots & 0 & 0 & 1 \\
-\rho _{n} & 0 & 0 & 0
\end{pmatrix}%
,\qquad R^{-1}=%
\begin{pmatrix}
0 & 0 & 0 & -\frac{1}{\rho _{n}} \\
1 & 0 & 0 & \vdots \\
0 & \ddots & 0 & \vdots \\
0 & 0 & 1 & -\frac{\rho _{n-1}}{\rho _{n}}
\end{pmatrix}
\end{gather*}
and $V^{(0)}=(0,\dots ,0,-1)^{\rm T}$. Notice that for $\gamma =0,\dots ,n-1$
\begin{gather*}
V_{k}^{(\gamma )}=-\delta _{k,n-\gamma }
\end{gather*}%
that is
\begin{gather*}
V^{(0)}=(0,\dots ,0,-1)^{\rm T},\qquad \dots , \qquad V^{(n-1)}=(-1,0,\dots ,0)^{\rm T}.
\end{gather*}%
The first nontrivial positive potential is
\begin{gather*}
V^{(n)}=(\rho _{1},\dots ,\rho _{n})^{\rm T}
\end{gather*}%
and the negative one
\begin{gather*}
V^{(-1)}=\left( \frac{1}{\rho _{n}},\dots ,\frac{\rho _{n-1}}{\rho _{n}}\right) ^{\rm T}.
\end{gather*}

\section{Lax representation in separation coordinates}\label{sec3}
In what follows we will assume that the functions $f(\lambda)$ and $\sigma(\lambda)$ appearing in the separation curve~(\ref{eq4}) are smooth,
real valued and defined on an open subset $\mathcal{U} \subset \mathbb{R}$
such that for any phase space point $(\lambda_{1},\dots,\lambda_{n},\mu_{1},\dots,\mu_{n})$ each $\lambda_{i} \in \mathcal{U}$, $i=1,2,\dots,n$.
Moreover, we will require that $f(\lambda) \neq 0$ for every $\lambda \in \mathcal{U}$.
In order to describe the Lax representation let us investigate the division of a smooth function on~$\mathcal{U}$ by a~polynomial.

\begin{Lemma}\label{l2} Let $b(\lambda)$ be a smooth function on $\mathcal{U}$ and
\begin{gather}
a(\lambda )=(\lambda -\lambda _{1})(\lambda -\lambda _{2})\dotsm (\lambda-\lambda _{n}) \label{eq9}
\end{gather}%
be a polynomial of order $n$ whose roots $\lambda _{1},\lambda _{2},\dots,\lambda _{n}\in \mathcal{U}$, then the fraction $\frac{b(\lambda )}{a(\lambda )}$ can be uniquely written as
\begin{gather}
\frac{b(\lambda )}{a(\lambda )}=h(\lambda )+\frac{r(\lambda )}{a(\lambda )},\label{eq10}
\end{gather}%
where $h(\lambda )$ is a smooth function on $\mathcal{U}$ and $r(\lambda )$
is a polynomial of order less than $n$.
\end{Lemma}

\begin{proof}
We calculate that
\begin{align*}
\frac{b(\lambda )}{a(\lambda )}& =\frac{h_{1}(\lambda )}{(\lambda -\lambda
_{2})(\lambda -\lambda _{3})\dotsm (\lambda -\lambda _{n})}+\frac{%
h_{0}(\lambda _{1})}{a(\lambda )} \\
& =\frac{h_{2}(\lambda )}{(\lambda -\lambda _{3})(\lambda -\lambda
_{4})\dotsm (\lambda -\lambda _{n})}+\frac{h_{0}(\lambda _{1})+h_{1}(\lambda
_{2})(\lambda -\lambda _{1})}{a(\lambda )}=\dotsb \\
& =h_{n}(\lambda )+\frac{h_{0}(\lambda _{1})+h_{1}(\lambda _{2})(\lambda
-\lambda _{1})+\dotsb +h_{n-1}(\lambda _{n})(\lambda -\lambda _{1})\dotsm
(\lambda -\lambda _{n-1})}{a(\lambda )},
\end{align*}
where
\begin{gather*}
h_{0}(\lambda )=b(\lambda ),\qquad h_{i}(\lambda )=\frac{h_{i-1}(\lambda)-h_{i-1}(\lambda _{i})}{\lambda -\lambda _{i}},\qquad i=1,2,\dots ,n.
\end{gather*}
From the Taylor theorem we can see that each function $h_{i}(\lambda )$ is
smooth on $\mathcal{U}$. In particular, we have that
\begin{gather*}
\lim_{\lambda \rightarrow \lambda _{i}}\frac{{\rm d}^{k}}{{\rm d}\lambda ^{k}}h_{i}(\lambda )=\frac{1}{k+1}\frac{{\rm d}^{k+1}}{{\rm d}\lambda ^{k+1}}h_{i-1}(\lambda_{i}).
\end{gather*}
Putting $h(\lambda )=h_{n}(\lambda )$ and $r(\lambda )=h_{0}(\lambda
_{1})+h_{1}(\lambda _{2})(\lambda -\lambda _{1})+\dotsb +h_{n-1}(\lambda
_{n})(\lambda -\lambda _{1})\dotsm (\lambda -\lambda _{n-1})$ we get (\ref{eq10}). The uniqueness of decomposition (\ref{eq10}) follows from the fact that if
\begin{gather*}
\frac{b(\lambda )}{a(\lambda )}=h_{1}(\lambda )+\frac{r_{1}(\lambda )}{%
a(\lambda )}=h_{2}(\lambda )+\frac{r_{2}(\lambda )}{a(\lambda )},
\end{gather*}
then
\begin{gather*}
h_{1}(\lambda )-h_{2}(\lambda )=\frac{r_{2}(\lambda )-r_{1}(\lambda )}{a(\lambda )},
\end{gather*}%
where the left hand side is a smooth function on $\mathcal{U}$ and the right
hand side is a rational function with singularities at $\lambda _{1},\lambda
_{2},\dots ,\lambda _{n}\in \mathcal{U}$. Therefore, $h_{1}(\lambda
)-h_{2}(\lambda )=0$ and $r_{2}(\lambda )-r_{1}(\lambda )=0$.
\end{proof}

We will denote $h(\lambda )$ in the decomposition (\ref{eq10}) by $\big[\frac{b(\lambda )}{a(\lambda )}\big] _{+}$ and $r(\lambda )$ by $b(\lambda
)\mod a(\lambda )$. In practice both these functions can be calculated using
the recursive formulas derived in the proof of Lemma~\ref{l2}. In a~particular case when $b(\lambda )$ is a polynomial or a pure Laurent
polynomial we may equivalently use the division algorithms for the division
of polynomial by polynomial and the division of pure Laurent polynomial by
polynomial. For example if $b(\lambda )=\sum\limits_{k=0}^{m}b_{k}\lambda ^{k}$ is a~polynomial of order $m$ and $a(\lambda )$ is a polynomial of order $n$
given by (\ref{eq9}), such that $m\geq n$, $\big[ \frac{b(\lambda )}{a(\lambda )}\big] _{+}$ is a polynomial part of $\frac{b(\lambda )}{a(\lambda )}$ of order $(m-n)$ and $b(\lambda )\mod a(\lambda )$ is the
reminder of $\frac{b(\lambda )}{a(\lambda )}$, i.e.,
\begin{gather*}
b(\lambda )\mod a(\lambda )=\rem\left[ \frac{b(\lambda )}{a(\lambda )}\right],
\end{gather*}
being a polynomial of order less than $n$. In the division algorithm we
divide $b(\lambda )$ by the highest order term of $a(\lambda )$. On the
other hand if $b(\lambda )=c\big(\lambda ^{-1}\big)=\sum\limits_{k=1}^{m}c_{k}\lambda ^{-k}$
is a pure Laurent polynomial of order $m$ and $a(\lambda )$ is a polynomial
of order $n$ given by (\ref{eq9}), $\big[ \frac{c(\lambda ^{-1})}{a(\lambda )}\big] _{+}$ is a pure Laurent polynomial of order $m$ and $c\big(\lambda
^{-1}\big)\mod a(\lambda )$ is the reminder of $\frac{c(\lambda ^{-1})}{a(\lambda )}$ being again a polynomial of degree less than $n$. In the
division algorithm we divide $c\big(\lambda ^{-1}\big)$ by the lowest order term of~$a(\lambda )$. In the case of arbitrary Laurent polynomial $P\big(\lambda
,\lambda ^{-1}\big)=b(\lambda )+c\big(\lambda ^{-1}\big)$, the division by polynomial $a(\lambda )$ splits onto two parts described above. Note, that for a smooth
function $b(\lambda )$ and a polynomial $a(\lambda )$ there holds
\begin{gather*}
b(\lambda )=b(\lambda ) \mod a(\lambda )+a(\lambda )\left[ \frac{b(\lambda )}{a(\lambda )}\right] _{+}.
\end{gather*}

We will consider infinitely many \emph{non-equivalent} Lax matrices $L\in \mathfrak{sl}(2,\mathbb{R})$ parameterized by smooth functions $g(\lambda )$, everywhere non-zero on the same domain $\mathcal{U}$ as functions $f(\lambda )$ and $\sigma (\lambda )$. The Lax matrices in the canonical representation $(\lambda ,\mu )$, parameterized by $g(\lambda )$, are taken in the form
\begin{gather}
L(\lambda )=
\begin{pmatrix}
v(\lambda ) & u(\lambda ) \\
w(\lambda ) & -v(\lambda )
\end{pmatrix},\label{eq11}
\end{gather}
where
\begin{gather}
u(\lambda )=\prod\limits_{k=1}^{n}(\lambda -\lambda _{k})=\sum_{k=0}^{n}\rho_{k}\lambda ^{n-k},\qquad \rho _{0}\equiv 1 \label{eq12}
\end{gather}
and
\begin{align}
v(\lambda )& =\sum_{i=1}^{n}g(\lambda _{i})\mu _{i}\prod\limits_{k\neq i}%
\frac{\lambda -\lambda _{k}}{\lambda _{i}-\lambda _{k}}=\sum_{i=1}^{n}\frac{%
u(\lambda )}{\lambda -\lambda _{i}}\frac{g(\lambda _{i})\mu _{i}}{\Delta _{i}%
}\overset{(\ref{eq14})}{=}-\sum_{k=1}^{n}\left[ \sum_{i=1}^{n}\frac{\partial
\rho _{k}}{\partial \lambda _{i}}\frac{g(\lambda _{i})\mu _{i}}{\Delta _{i}}%
\right] \lambda ^{n-k} \notag \\
& =-\sum_{k=0}^{n-1}\left[ \sum_{i=1}^{n}\frac{\partial \rho _{n-k}}{%
\partial \lambda _{i}}\frac{g(\lambda _{i})\mu _{i}}{\Delta _{i}}\right]
\lambda ^{k}. \label{eq13}
\end{align}%
Notice that $u(\lambda _{i})=0$, $v(\lambda _{i})=g(\lambda _{i})\mu _{i}$
and
\begin{gather}
\Delta _{i}=u_{i}(\lambda ),\qquad u_{i}(\lambda ):=\frac{u(\lambda )}{%
\lambda -\lambda _{i}}=-\frac{\partial u(\lambda )}{\partial \lambda _{i}}%
=\prod\limits_{k\neq i}(\lambda -\lambda _{k})=-\sum_{k=1}^{n}\frac{\partial
\rho _{k}}{\partial \lambda _{i}}\lambda ^{n-k}. \label{eq14}
\end{gather}%
Moreover,
\begin{gather}
w(\lambda )=-2\frac{g^{2}(\lambda )}{f(\lambda )}\left[ \frac{F(\lambda
,v(\lambda )/g(\lambda ))}{u(\lambda )}\right] _{+}, \label{eq15}
\end{gather}%
where $F(x,y)=\frac{1}{2}f(x)y^{2}-\sigma (x)$. The function $w(\lambda )$
splits onto kinetic part $w_{E}(\lambda )$ and potential part $w_{V}(\lambda
)$ respectively:
\begin{gather}
w(\lambda )=w_{E}(\lambda )+w_{V}(\lambda )=-\frac{g^{2}(\lambda )}{%
f(\lambda )}\left[ \frac{f(\lambda )v^{2}(\lambda )/g^{2}(\lambda )}{%
u(\lambda )}\right] _{+}+2\frac{g^{2}(\lambda )}{f(\lambda )}\left[ \frac{%
\sigma (\lambda )}{u(\lambda )}\right] _{+}. \label{eq16}
\end{gather}

The main result we state in the following theorem.

\begin{Theorem}
\label{t1} For arbitrary everywhere non-zero smooth function $g(\lambda )$, separation curve \eqref{eq4} that generates dynamical systems \eqref{eq5}, is reconstructed as
follows
\begin{gather}
\det \left[ L(\lambda )-g(\lambda )\mu I\right] =0\quad \Longleftrightarrow \quad \sigma
(\lambda )+\sum_{k=1}^{n}H_{k}\lambda ^{n-1}=\frac{1}{2}f(\lambda )\mu ^{2}.\label{eq17}
\end{gather}

Moreover, Lax equations for systems \eqref{eq6}, parameterized by $g(\lambda )$, take the form
\begin{gather}
\frac{{\rm d}}{{\rm d}t_{k}}L(\lambda )=[U_{k}(\lambda ),L(\lambda )], \label{eq18}
\end{gather}%
where the Lax matrices $L(\lambda )$ are defined by \eqref{eq11}--\eqref{eq15} and
\begin{gather}
U_{k}(\lambda )=\left[ \frac{B_{k}(\lambda )}{u(\lambda )}\right] _{+},\qquad
B_{k}(\lambda )=\frac{1}{2}\frac{f(\lambda )}{g(\lambda )}\left[ \frac{u(\lambda )}{\lambda ^{n-k+1}}\right] _{+}L(\lambda ). \label{eq19}
\end{gather}
\end{Theorem}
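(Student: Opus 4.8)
The whole argument rests on the division identity $b(\lambda)=\big(b(\lambda)\mod u(\lambda)\big)+u(\lambda)\big[b(\lambda)/u(\lambda)\big]_{+}$ from Lemma~\ref{l2}, together with the observation that, since $u(\lambda_i)=0$, the remainder $b\mod u$ is the unique polynomial of degree $<n$ interpolating $b$ at the nodes $\lambda_1,\dots,\lambda_n$. To obtain the spectral curve \eqref{eq17} I would start from the tracelessness of $L\in\mathfrak{sl}(2,\mathbb R)$, which gives
\begin{gather*}
\det\big[L(\lambda)-g(\lambda)\mu I\big]=g^{2}(\lambda)\mu^{2}-v^{2}(\lambda)-u(\lambda)w(\lambda).
\end{gather*}
Inserting \eqref{eq15} for $w$ and writing $u\,\big[F(\lambda,v/g)/u\big]_{+}=F(\lambda,v/g)-\big(F(\lambda,v/g)\mod u\big)$, the $v^{2}$ contributions cancel and one is left with $v^{2}+uw=2\frac{g^{2}}{f}\big(\sigma+F(\lambda,v/g)\mod u\big)$. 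Since $v(\lambda_i)=g(\lambda_i)\mu_i$, the interpolation remark identifies $F(\lambda,v/g)\mod u$ with the unique degree $<n$ polynomial taking the values $F(\lambda_i,\mu_i)$ at the nodes, i.e.\ with $\sum_k H_k\lambda^{n-k}$ by the very definition \eqref{eq5} of the Hamiltonians. Dividing by $2g^{2}/f$ reproduces \eqref{eq4}, and the equivalence follows.

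For the Lax equations \eqref{eq18} the decisive step is algebraic. Writing $B_k=\psi_k L$ with the scalar factor $\psi_k(\lambda)=\tfrac12\frac{f}{g}\big[u/\lambda^{n-k+1}\big]_{+}$, the matrix $B_k$ is a scalar multiple of $L$, so $[B_k,L]=0$. Decomposing $B_k/u=U_k+R_k/u$ with $R_k:=B_k\mod u$ taken entrywise then yields $[U_k,L]=-\frac1u[R_k,L]$. Because the $(1,2)$ entry of $B_k$ equals $\psi_k u$ and is divisible by $u$, the remainder has the form $R_k=\begin{pmatrix}\alpha_k & 0\\ \gamma_k & -\alpha_k\end{pmatrix}$ with $\alpha_k=(\psi_k v)\mod u$ and $\gamma_k=(\psi_k w)\mod u$. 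Evaluating the commutator collapses the matrix identity \eqref{eq18} into the three scalar equations
\begin{gather*}
\frac{{\rm d}u}{{\rm d}t_k}=-2\alpha_k,\qquad \frac{{\rm d}v}{{\rm d}t_k}=\gamma_k,\qquad \frac{{\rm d}w}{{\rm d}t_k}=-\frac{2}{u}(\gamma_k v-\alpha_k w).
\end{gather*}
The last of these is not independent: differentiating the time-independent curve $v^{2}+uw=2\frac{g^2}{f}\big(\sigma+\sum_k H_k\lambda^{n-k}\big)$ along the flow gives $u\,\dot w=-2v\,\dot v-w\,\dot u$, which coincides with the third equation once the first two hold.

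The first equation is straightforward: from $u=\prod_m(\lambda-\lambda_m)$ one has $\dot u=-\sum_i\dot\lambda_i\,u_i(\lambda)$, while $\dot\lambda_i=\partial H_k/\partial\mu_i=-\frac{\partial\rho_k}{\partial\lambda_i}\frac{f(\lambda_i)\mu_i}{\Delta_i}$ by \eqref{eq5}. Writing $\alpha_k$ in Lagrange form and using the synthetic-division identity $\big[u/\lambda^{n-k+1}\big]_{+}(\lambda_i)=-\partial\rho_k/\partial\lambda_i$, equivalent to \eqref{eq14}, matches the two expressions. The real work is the second equation $\dot v=\gamma_k$: both sides are polynomials of degree $<n$, so it suffices to compare their values at the nodes. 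On the left I would use $\big[\tfrac{\partial v}{\partial t_k}\big]_{\lambda=\lambda_i}=\frac{{\rm d}}{{\rm d}t_k}\big(g(\lambda_i)\mu_i\big)-\tfrac{\partial v}{\partial\lambda}(\lambda_i)\,\dot\lambda_i$, and on the right $\gamma_k(\lambda_i)=\psi_k(\lambda_i)w(\lambda_i)$, where $w(\lambda_i)$ is extracted from $w=-2\frac{g^2}{f}\big[F/u\big]_{+}$ by a single application of l'H\^opital's rule at the simple zero $\lambda_i$ of $u$. Crucially, $\dot\mu_i$ need not be taken from $\partial H_k/\partial\lambda_i$: differentiating the separation relation \eqref{eq4} at $\lambda_i$ along the flow, and using $\dot H_j=0$, expresses $f(\lambda_i)\mu_i\,\dot\mu_i$ through $\dot\lambda_i$, after which every term cancels against the expansion of $\psi_k(\lambda_i)w(\lambda_i)$. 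I expect this node-by-node cancellation, combined with the correct evaluation of $w$ at the nodes, to be the main obstacle; everything else is bookkeeping built on the division identity.
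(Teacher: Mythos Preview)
Your derivation of \eqref{eq17} coincides with the paper's: the paper isolates the interpolation identity $\sum_k H_k\lambda^{n-k}=F(\lambda,v/g)\mod u$ as a separate lemma (Lemma~\ref{l1}) and then performs exactly the determinant computation you sketch.

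For the Lax equations \eqref{eq18} your route is genuinely different from the paper's. The paper never invokes $[B_k,L]=0$ or the remainder matrix $R_k$; instead it builds three auxiliary lemmas. It computes the Poisson brackets $\{u(\lambda),v(\lambda')\}$ and $\{v(\lambda),v(\lambda')\}$ (Lemma~\ref{l3}), proves a polynomial symmetry identity in two spectral variables (Lemma~\ref{l4}), and combines these with Lemma~\ref{l1} to obtain closed formulas for $X_k u$ and $X_k v$ directly as $\mod u$ expressions (Lemma~\ref{l5}), treating all $k$ at once via a generating variable $\lambda'$. From these it derives $X_k w$ by differentiating the definition \eqref{eq15} and assembles the commutator. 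Your argument instead reduces everything algebraically to the two scalar identities $\dot u=-2\alpha_k$ and $\dot v=\gamma_k$ (the third following from conservation of $H_j$), and verifies them by Lagrange interpolation at the nodes $\lambda_i$, using Hamilton's equation for $\dot\lambda_i$ and the differentiated separation relation to eliminate $\dot\mu_i$. This is more hands-on and bypasses the somewhat technical Lemma~\ref{l4}; it also makes the role of the separation relations very explicit. The paper's generating-function approach, on the other hand, avoids any pointwise evaluation, never divides by $\mu_i$, and packages $X_k u$, $X_k v$ in a form (equations \eqref{eq27}--\eqref{eq28}) that matches the commutator entries directly. Both approaches are sound; the node computation you flag as ``the main obstacle'' does close up, with the l'H\^opital evaluation of $w(\lambda_i)$ producing precisely the $(F\mod u)'(\lambda_i)$ term that cancels against the one coming from the differentiated separation relation.
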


The proof of the above theorem is involved so we present it in the separate section.

As an example let us find explicit formulas for the matrices $L(\lambda)$ and
$U_k(\lambda)$ written in a~coordinate independent way for a general St\"{a}ckel system from the
Benenti class in the case of two degrees of freedom, i.e., $n = 2$. A separation
curve of that system has the form
\begin{gather*}
\sigma(\lambda) + H_1 \lambda + H_2 = \frac{1}{2} f(\lambda) \mu^2,
\end{gather*}
where $f(\lambda)$ and $\sigma(\lambda)$ are smooth functions. According to
(\ref{eq5}) the Hamiltonians in separation coordinates are equal
\begin{gather*}
H_1(\lambda,\mu) = \frac{F(\lambda_1,\mu_1) - F(\lambda_2,\mu_2)}{\lambda_1 - \lambda_2} \\
\hphantom{H_1(\lambda,\mu)}{} = \frac{1}{2} \frac{f(\lambda_1)}{\lambda_1 - \lambda_2} \mu_1^2
- \frac{1}{2} \frac{f(\lambda_2)}{\lambda_1 - \lambda_2} \mu_2^2
+ \frac{\sigma(\lambda_2) - \sigma(\lambda_1)}{\lambda_1 - \lambda_2}, \\
H_2(\lambda,\mu) = \frac{\lambda_1 F(\lambda_2,\mu_2) - \lambda_2 F(\lambda_1,\mu_1)}{\lambda_1 - \lambda_2} \\
\hphantom{H_2(\lambda,\mu)}{} = \frac{1}{2} \frac{\lambda_1 f(\lambda_2)}{\lambda_1 - \lambda_2} \mu_2^2
- \frac{1}{2} \frac{\lambda_2 f(\lambda_1)}{\lambda_1 - \lambda_2} \mu_1^2
+ \frac{\lambda_2 \sigma(\lambda_1) - \lambda_1 \sigma(\lambda_2)}{\lambda_1 - \lambda_2}.
\end{gather*}
The Lax matrix $L(\lambda)$ parameterized by a general everywhere non-zero smooth function $g(\lambda)$ will be equal
\begin{gather*}
L(\lambda) = \begin{pmatrix}
v(\lambda) & u(\lambda) \\
w(\lambda) & -v(\lambda)
\end{pmatrix},
\end{gather*}
where in accordance to (\ref{eq12}), (\ref{eq13}), (\ref{eq15}) and (\ref{eq25})
\begin{gather*}
u(\lambda) = \lambda^2 + \rho_1 \lambda + \rho_2, \\
v(\lambda) = v_1 \lambda + v_2, \\
w(\lambda) = -\frac{v^2(\lambda)}{u(\lambda)} + \frac{2g^2(\lambda)\sigma(\lambda)}{f(\lambda)u(\lambda)}+ \frac{2g^2(\lambda)}{f(\lambda)u(\lambda)}(H_1 \lambda + H_2),
\end{gather*}
where in separation coordinates
\begin{gather*}
\rho_1 = -\lambda_1 - \lambda_2,\qquad \rho_2 = \lambda_1 \lambda_2,\\
v_1 = \frac{g(\lambda_1)\mu_1 - g(\lambda_2)\mu_2}{\lambda_1 - \lambda_2},\qquad
v_2 = \frac{\lambda_1 g(\lambda_2)\mu_2 - \lambda_2 g(\lambda_1)\mu_1}{\lambda_1 - \lambda_2}.
\end{gather*}
We find that
\begin{gather*}
U_k(\lambda) = \frac{1}{2u(\lambda)} \begin{pmatrix}
a_k(\lambda) & b_k(\lambda) \\
c_k(\lambda) & -a_k(\lambda)
\end{pmatrix}, \qquad k = 1,2,
\end{gather*}
where
\begin{alignat*}{3}
&a_1(\lambda) = \frac{f(\lambda)v(\lambda)}{g(\lambda)} + \{u(\lambda),H_1\}, \qquad &&
a_2(\lambda) = \frac{(\lambda + \rho_1)f(\lambda)v(\lambda)}{g(\lambda)} + \{u(\lambda),H_2\},& \\
&b_1(\lambda) = \frac{f(\lambda)u(\lambda)}{g(\lambda)}, \qquad &&
b_2(\lambda) = \frac{(\lambda + \rho_1)f(\lambda)u(\lambda)}{g(\lambda)},& \\
&c_1(\lambda) = \frac{f(\lambda)w(\lambda)}{g(\lambda)} - 2\{v(\lambda),H_1\}, \qquad &&
c_2(\lambda) = \frac{(\lambda + \rho_1)f(\lambda)w(\lambda)}{g(\lambda)} - 2\{v(\lambda),H_2\}.&
\end{alignat*}
Indeed, for a smooth function $b(\lambda)$, with the use of the recursive formulas derived in the proof of Lemma~\ref{l2}, we get
that
\begin{align}
\left[\frac{b(\lambda)}{u(\lambda)}\right]_+ & = \frac{\frac{b(\lambda) - b(\lambda_1)}{\lambda - \lambda_1}
- \frac{b(\lambda_2) - b(\lambda_1)}{\lambda_2 - \lambda_1}}{\lambda - \lambda_2} \nonumber \\
& = \frac{b(\lambda)}{u(\lambda)} - \frac{1}{u(\lambda)}\left(\frac{b(\lambda_1) - b(\lambda_2)}{\lambda_1 - \lambda_2} \lambda
+ \frac{\lambda_1 b(\lambda_2) - \lambda_2 b(\lambda_1)}{\lambda_1 - \lambda_2}\right)\label{eq20}
\end{align}
and
\begin{gather}
\left[\frac{b(\lambda)}{u(\lambda)}\right]_+ \bigg|_{\lambda = \lambda_k}
= (-1)^k \left(-\frac{b'(\lambda_k)}{\lambda_1 - \lambda_2} + \frac{b(\lambda_1) - b(\lambda_2)}{(\lambda_1 - \lambda_2)^2}\right), \qquad k = 1,2. \label{eq21}
\end{gather}
From (\ref{eq21}), by putting $b(\lambda) = F(\lambda,v(\lambda)/g(\lambda))$, we can calculate $w(\lambda_1)$ and $w(\lambda_2)$ and then
\begin{align}
\frac{f(\lambda_1)w(\lambda_1)}{2(\lambda_1 - \lambda_2)g(\lambda_1)} - \frac{f(\lambda_2)w(\lambda_2)}{2(\lambda_1 - \lambda_2)g(\lambda_2)} & =
 \frac{\partial v_1}{\partial \lambda_1} \frac{\partial H_1}{\partial \mu_1}
+ \frac{\partial v_1}{\partial \lambda_2} \frac{\partial H_1}{\partial \mu_2}
- \frac{\partial v_1}{\partial \mu_1} \frac{\partial H_1}{\partial \lambda_1}
- \frac{\partial v_1}{\partial \mu_2} \frac{\partial H_1}{\partial \lambda_2} \nonumber \\
& = \{v_1,H_1\}.
\label{eq22}
\end{align}
Using the fact that
\begin{alignat*}{5}
&\frac{\partial v_2}{\partial \lambda_1} = -\lambda_2 \frac{\partial v_1}{\partial \lambda_1}, \qquad &&
\frac{\partial v_2}{\partial \lambda_2} = -\lambda_1 \frac{\partial v_1}{\partial \lambda_2}, \qquad &&
\frac{\partial v_2}{\partial \mu_1} = -\lambda_2 \frac{\partial v_1}{\partial \mu_1}, \qquad &&
\frac{\partial v_2}{\partial \mu_2} = -\lambda_1 \frac{\partial v_1}{\partial \mu_2}, &\\
&\frac{\partial H_2}{\partial \lambda_1} = -\lambda_2 \frac{\partial H_1}{\partial \lambda_1}, \qquad &&
\frac{\partial H_2}{\partial \lambda_2} = -\lambda_1 \frac{\partial H_1}{\partial \lambda_2}, \qquad &&
\frac{\partial H_2}{\partial \mu_1} = -\lambda_2 \frac{\partial H_1}{\partial \mu_1}, \qquad &&
\frac{\partial H_2}{\partial \mu_2} = -\lambda_1 \frac{\partial H_1}{\partial \mu_2},&
\end{alignat*}
we also get
\begin{gather}
-\frac{\lambda_2 f(\lambda_1)w(\lambda_1)}{2(\lambda_1 - \lambda_2)g(\lambda_1)} + \frac{\lambda_1 f(\lambda_2)w(\lambda_2)}{2(\lambda_1 - \lambda_2)g(\lambda_2)} \label{eq23}\\
\qquad{} = - \lambda_2 \frac{\partial v_1}{\partial \lambda_1} \frac{\partial H_1}{\partial \mu_1}
- \lambda_1 \frac{\partial v_1}{\partial \lambda_2} \frac{\partial H_1}{\partial \mu_2}
+ \lambda_2 \frac{\partial v_1}{\partial \mu_1} \frac{\partial H_1}{\partial \lambda_1} + \lambda_1 \frac{\partial v_1}{\partial \mu_2} \frac{\partial H_1}{\partial \lambda_2}= \{v_2,H_1\} = \{v_1,H_2\}, \nonumber\\
\frac{\lambda_2^2 f(\lambda_1)w(\lambda_1)}{2(\lambda_1 - \lambda_2)g(\lambda_1)} - \frac{\lambda_1^2 f(\lambda_2)w(\lambda_2)}{2(\lambda_1 - \lambda_2)g(\lambda_2)} \nonumber\\
\qquad{} = \lambda_2^2 \frac{\partial v_1}{\partial \lambda_1} \frac{\partial H_1}{\partial \mu_1}
+ \lambda_1^2 \frac{\partial v_1}{\partial \lambda_2} \frac{\partial H_1}{\partial \mu_2}
- \lambda_2^2 \frac{\partial v_1}{\partial \mu_1} \frac{\partial H_1}{\partial \lambda_1} - \lambda_1^2 \frac{\partial v_1}{\partial \mu_2} \frac{\partial H_1}{\partial \lambda_2}
= \{v_2,H_2\}. \label{eq24}
\end{gather}
From (\ref{eq20}), by putting $b(\lambda) = \frac{f(\lambda)w(\lambda)}{2g(\lambda)}$ and $b(\lambda) = \frac{(\lambda + \rho_1)f(\lambda)w(\lambda)}{2g(\lambda)}$,
and using (\ref{eq22}), (\ref{eq23}), (\ref{eq24}) and
\begin{gather*}
\left[\frac{u(\lambda)}{\lambda}\right]_+ = \lambda + \rho_1, \qquad \left[\frac{u(\lambda)}{\lambda^2}\right]_+ = 1,
\end{gather*}
we get formulas for $c_1(\lambda)$ and $c_2(\lambda)$. Similarly we calculate $a_1(\lambda)$, $a_2(\lambda)$, $b_1(\lambda)$ and $b_2(\lambda)$.

Let us notice that for a given Lax representation $(L,U)$, with fixed $%
g(\lambda )$, there exist infinitely many \emph{gauge equivalent} Lax
representations $(L^{\prime },U^{\prime })$. Actually, let $\Omega $ be a $%
2\times 2$ invertible matrix, with matrix elements dependent on phase space
coordinates but independent on spectral parameter $\lambda $. Then, for
\begin{gather*}
L^{\prime }=\Omega L\Omega ^{-1},\qquad U^{\prime }=\Omega U\Omega ^{-1}+\Omega _{t}\Omega ^{-1}
\end{gather*}%
one can show that
\begin{gather*}
L_{t}=[U,L]\quad \Longleftrightarrow \quad L_{t}^{\prime }=[U^{\prime },L^{\prime }]
\end{gather*}%
and
\begin{gather*}
\det (L-g(\lambda )\mu I)=\det (L^{\prime }-g(\lambda )\mu I)=0.
\end{gather*}%
Hence, from the construction, such class of equivalent Lax representations
has the same $\lambda$-struc\-ture.

\section{Proof of Theorem~\protect\ref{t1}} \label{sec4}
First, let us prove the following lemma.

\begin{Lemma}\label{l1}The following equality holds
\begin{gather}
\sum_{k=1}^n H_k \lambda^{n-k} = F(\lambda,v(\lambda)/g(\lambda)) \mod u(\lambda)\label{eq25}
\end{gather}
for $F(x,y)=\frac{1}{2}f(x)y^{2}-\sigma (x)$ and $H_k$ defined by the linear system~\eqref{eq2}.
\end{Lemma}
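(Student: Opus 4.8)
The plan is to exploit the interpolation characterization of the remainder operation that is already implicit in Lemma~\ref{l2}. First I would observe that for any smooth $b(\lambda)$ the polynomial $r(\lambda) = b(\lambda) \mod u(\lambda)$ is, by the construction in Lemma~\ref{l2}, the unique polynomial of degree less than $n$ for which $b(\lambda) = h(\lambda)u(\lambda) + r(\lambda)$ with $h$ smooth on $\mathcal{U}$. Evaluating this identity at each root $\lambda_i$ of $u(\lambda)$ and using $u(\lambda_i)=0$ yields $r(\lambda_i) = b(\lambda_i)$. Thus $b(\lambda)\mod u(\lambda)$ is precisely the polynomial of degree at most $n-1$ that interpolates $b$ at the nodes $\lambda_1,\dots,\lambda_n$.

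Next I would specialize to $b(\lambda) = F(\lambda, v(\lambda)/g(\lambda))$, the right-hand side of~\eqref{eq25}. Since $v(\lambda_i) = g(\lambda_i)\mu_i$ (recorded just after~\eqref{eq13}) and $g$ is everywhere non-zero, we have $v(\lambda_i)/g(\lambda_i) = \mu_i$, so that $b(\lambda_i) = F(\lambda_i,\mu_i) = \tfrac{1}{2}f(\lambda_i)\mu_i^2 - \sigma(\lambda_i)$ for every $i$.

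For the left-hand side of~\eqref{eq25}, I would note that $\sum_{k=1}^{n} H_k\lambda^{n-k}$ is manifestly a polynomial in $\lambda$ of degree at most $n-1$, with the Hamiltonians as coefficients. Evaluating it at $\lambda = \lambda_i$ and invoking the $i$-th copy of the separation relations~\eqref{eq2} gives $\sum_{k=1}^{n} H_k\lambda_i^{n-k} = \tfrac{1}{2}f(\lambda_i)\mu_i^2 - \sigma(\lambda_i) = b(\lambda_i)$. Hence both sides of~\eqref{eq25} are polynomials of degree less than $n$ that agree at the $n$ points $\lambda_1,\dots,\lambda_n$; their difference is a polynomial of degree at most $n-1$ with $n$ roots and therefore vanishes identically.

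I expect the main point requiring care to be bookkeeping rather than analysis: one must keep the degree bound sharp (both sides genuinely have degree less than $n$) and justify that the nodes may be treated as interpolation points. For a generic phase-space point the $\lambda_i$ are pairwise distinct and the interpolation argument applies verbatim; the coincident case follows either by continuity in $(\lambda_1,\dots,\lambda_n)$ or, more directly, by appealing to the confluent recursive formula for the remainder derived in the proof of Lemma~\ref{l2}, which yields a smooth quotient $h$ even when some roots collide, so that the identity $r(\lambda_i)=b(\lambda_i)$ is replaced by the matching of derivatives and the uniqueness conclusion persists.
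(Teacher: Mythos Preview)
Your argument is correct and is essentially the same as the paper's: both proofs show that the two sides of \eqref{eq25} are polynomials in $\lambda$ of degree less than $n$ that agree at the $n$ nodes $\lambda_1,\dots,\lambda_n$ (using $u(\lambda_i)=0$ and $v(\lambda_i)=g(\lambda_i)\mu_i$ together with the separation relations), and hence coincide by the uniqueness of polynomial interpolation. The only difference is that you add a brief discussion of the confluent case, which the paper omits.
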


\begin{proof}In the proof we will use the property that a polynomial of order less than $n$
is uniquely specified by its values at $n$ distinct points. The functions
$H_k = H_k(\lambda_1,\dots,\lambda_n,\mu_1,\dots,\mu_n)$ satisfy the equations~(\ref{eq2})
\begin{gather*}
\sum_{k=1}^n H_k(\lambda_1,\dots,\lambda_n,\mu_1,\dots,\mu_n) \lambda_i^{n-k}
= F(\lambda_i,\mu_i), \qquad i = 1,2,\dots,n.
\end{gather*}
For fixed $\lambda_1,\dots,\lambda_n,\mu_1,\dots,\mu_n$ such that $\lambda_i \neq \lambda_j$ for $i \neq j$ the expression
\begin{gather*}
\sum_{k=1}^n H_k(\lambda_1,\dots,\lambda_n,\mu_1,\dots,\mu_n) \lambda^{n-k}
\end{gather*}
is a polynomial in $\lambda$ of order $n - 1$, which takes values $F(\lambda_i,\mu_i)$ at $\lambda = \lambda_i$. On the other hand
\begin{gather*}
F(\lambda,v(\lambda)/g(\lambda)) \mod u(\lambda) = F(\lambda,v(\lambda)/g(\lambda))
 - u(\lambda) \left[\frac{F(\lambda,v(\lambda)/g(\lambda))}{u(\lambda)}\right]_+
\end{gather*}
is also a polynomial in $\lambda$ of order $n - 1$, which takes the same values
$F(\lambda_i,\mu_i)$ at $\lambda = \lambda_i$, since $u(\lambda_i) = 0$ and
$v(\lambda_i) = g(\lambda_i)\mu_i$. This proves the equality~(\ref{eq25}).
\end{proof}

Now we can pass to the proof of formula (\ref{eq17}).

\begin{proof}[Proof of (\ref{eq17})]
We calculate that
\begin{gather*}
\det [ L(\lambda) - g(\lambda)\mu I ] = \det \begin{pmatrix}
 v(\lambda) - g(\lambda)\mu & u(\lambda) \\
 w(\lambda) & -v(\lambda) - g(\lambda)\mu
\end{pmatrix} \\
\hphantom{\det [ L(\lambda) - g(\lambda)\mu I ]}{}
 = -(v(\lambda) - g(\lambda)\mu)(v(\lambda) + g(\lambda)\mu) - u(\lambda)w(\lambda) \\
\hphantom{\det [ L(\lambda) - g(\lambda)\mu I ]}{} = -v^2(\lambda) + g^2(\lambda)\mu^2 + 2\frac{g^2(\lambda)}{f(\lambda)} u(\lambda)
 \left[\frac{F(\lambda ,v(\lambda )/g(\lambda))}{u(\lambda)}\right]_+ \\
\hphantom{\det [ L(\lambda) - g(\lambda)\mu I ]}{} = -2\frac{g^2(\lambda)}{f(\lambda)} \left( \frac{1}{2}f(\lambda) v^2(\lambda)/g^2(\lambda)
 - \sigma(\lambda) - u(\lambda)
 \left[\frac{F(\lambda,v(\lambda)/g(\lambda))}{u(\lambda)}\right]_+ \right) \\
\hphantom{\det [ L(\lambda) - g(\lambda)\mu I ]=}{} - 2\frac{g^2(\lambda)}{f(\lambda)} \sigma(\lambda) + g^2(\lambda)\mu^2 \\
\hphantom{\det [ L(\lambda) - g(\lambda)\mu I ]}{} = -2\frac{g^2(\lambda)}{f(\lambda)} \bigl( F(\lambda,v(\lambda)/g(\lambda)) \mod u(\lambda) \bigr)
 + 2\frac{g^2(\lambda)}{f(\lambda)} \left( \frac{1}{2}f(\lambda) \mu^2 - \sigma(\lambda) \right) \\
\hphantom{\det [ L(\lambda) - g(\lambda)\mu I ]}{} = 2\frac{g^2(\lambda)}{f(\lambda)} \left( -\sum_{k=1}^n H_k \lambda^{n-k}
 + \frac{1}{2}f(\lambda) \mu^2 - \sigma(\lambda) \right),
\end{gather*}
where in the last equality we used Lemma~\ref{l1}. This proves (\ref{eq17}).
\end{proof}

Now we will show that the Lax equations (\ref{eq18}) hold. The proof is based
on the following lemmas.

\begin{Lemma}\label{l3}
The Poisson bracket of $u(\lambda)$ and $v(\lambda)$ is equal
\begin{subequations}
\begin{gather}
\{u(\lambda),u(\lambda')\} = 0, \qquad \{v(\lambda),v(\lambda')\} = 0, \label{eq26a} \\
\{u(\lambda),v(\lambda')\} = \{u(\lambda'),v(\lambda)\}= -\sum_{k=1}^n \left( g(\lambda) \left[\frac{u(\lambda)}{\lambda^{n-k+1}} \right]_+ \mod u(\lambda) \right) \lambda'^{n-k}. \label{eq26b}
\end{gather}
\end{subequations}
\end{Lemma}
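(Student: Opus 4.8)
The plan is to compute each of the three brackets directly from the canonical form $\{F,G\}=\sum_i(\partial_{\lambda_i}F\,\partial_{\mu_i}G-\partial_{\mu_i}F\,\partial_{\lambda_i}G)$ and to reduce every identity to the principle already used in the proof of Lemma~\ref{l1}: a polynomial of degree less than $n$ is determined by its values at the $n$ distinct nodes $\lambda_1,\dots,\lambda_n$. Throughout I write $\ell_i(\lambda)=u(\lambda)/((\lambda-\lambda_i)\Delta_i)$ for the Lagrange basis, so that $\ell_i(\lambda_j)=\delta_{ij}$ and $u_j(\lambda)=\Delta_j\ell_j(\lambda)$, and I record the derivatives that drive everything: from \eqref{eq14}, $\partial u(\lambda)/\partial\lambda_j=-u_j(\lambda)$ and $\partial u(\lambda)/\partial\mu_j=0$, while from \eqref{eq13}, $\partial v(\lambda)/\partial\mu_j=g(\lambda_j)\ell_j(\lambda)$. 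The first relation in \eqref{eq26a}, $\{u(\lambda),u(\lambda')\}=0$, is then immediate, since $u$ depends only on the $\lambda_i$ and so every $\mu$-derivative vanishes.

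For $\{u(\lambda),v(\lambda')\}$ I would use $\partial u/\partial\mu_j=0$ to collapse the bracket to a single sum,
\[
\{u(\lambda),v(\lambda')\}=\sum_{j}\frac{\partial u(\lambda)}{\partial\lambda_j}\frac{\partial v(\lambda')}{\partial\mu_j}=-\sum_{j=1}^{n}u_j(\lambda)\,g(\lambda_j)\,\ell_j(\lambda')=-\sum_{j=1}^{n}g(\lambda_j)\Delta_j\,\ell_j(\lambda)\,\ell_j(\lambda').
\]
The last form is a polynomial of degree $\le n-1$ in each of $\lambda,\lambda'$ and is manifestly symmetric under $\lambda\leftrightarrow\lambda'$, which already yields $\{u(\lambda),v(\lambda')\}=\{u(\lambda'),v(\lambda)\}$. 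To match the closed form on the right of \eqref{eq26b} I would compare both expressions at the $n^2$ node-pairs $\lambda=\lambda_a$, $\lambda'=\lambda_b$: the left side evaluates to $-g(\lambda_a)\Delta_a\delta_{ab}$, while on the right the reduction $\mod u$ preserves values at nodes, so writing $P_k(\lambda)=[u(\lambda)/\lambda^{n-k+1}]_+$ the right side at the nodes equals $-g(\lambda_a)\sum_k P_k(\lambda_a)\lambda_b^{\,n-k}$. The whole sum then collapses through the purely algebraic identity $\sum_{k=1}^{n}P_k(x)\,y^{\,n-k}=(u(x)-u(y))/(x-y)$, which I would prove by expanding $P_k(\lambda)=\sum_{m=0}^{k-1}\rho_m\lambda^{k-1-m}$ and summing the resulting geometric series. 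Evaluated at $x=\lambda_a$, $y=\lambda_b$ this gives $0$ for $a\ne b$ and $u'(\lambda_a)=\Delta_a$ in the limit $b\to a$, reproducing the left side; since both sides are degree-$\le(n-1)$ polynomials in each variable agreeing at all node-pairs, they coincide.

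For $\{v(\lambda),v(\lambda')\}$, which I expect to be the main obstacle, I would first pass to the variables $\nu_i=g(\lambda_i)\mu_i$, which satisfy $\{\nu_i,\nu_j\}=0$, $\{\lambda_i,\lambda_j\}=0$, $\{\lambda_m,\nu_j\}=g(\lambda_j)\delta_{mj}$, and write $v(\lambda)=\sum_i\nu_i\ell_i(\lambda)$. In this description no derivative of $g$ ever appears. Expanding $\{v(\lambda),v(\lambda')\}$ by Leibniz, the $\{\nu_i,\nu_j\}$- and $\{\ell_i,\ell_j\}$-terms drop out and one is left with
\[
\{v(\lambda),v(\lambda')\}=\sum_{i=1}^{n}\nu_i\bigl(S_i(\lambda,\lambda')-S_i(\lambda',\lambda)\bigr),\qquad S_i(\lambda,\lambda'):=\sum_{j=1}^{n}g(\lambda_j)\frac{\partial\ell_i(\lambda)}{\partial\lambda_j}\,\ell_j(\lambda').
\]
Because $g$ is nowhere zero the $\nu_i$ are independent, so the claim is equivalent to the symmetry $S_i(\lambda,\lambda')=S_i(\lambda',\lambda)$ for each $i$. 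Both sides are polynomials of degree $\le n-1$ in $\lambda'$, so it suffices to check agreement at $\lambda'=\lambda_l$. Using $\partial\ell_i/\partial\lambda_j=\ell_i(\lambda)(\tfrac{1}{\lambda_i-\lambda_j}-\tfrac{1}{\lambda-\lambda_j})$ for $j\ne i$ and $\partial\ell_i/\partial\lambda_i=-\ell_i(\lambda)\sum_{k\ne i}\tfrac{1}{\lambda_i-\lambda_k}$, the evaluation splits into the case $l=i$, where both sides equal $-g(\lambda_i)\ell_i(\lambda)\sum_{k\ne i}\tfrac{1}{\lambda_i-\lambda_k}$, and the case $l\ne i$, where the two sides are reconciled by the elementary identity $\tfrac{1}{\lambda-\lambda_i}(\tfrac{1}{\lambda_i-\lambda_l}-\tfrac{1}{\lambda-\lambda_l})=\tfrac{1}{(\lambda_i-\lambda_l)(\lambda-\lambda_l)}$. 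The delicate point is the $0\cdot\infty$ evaluation of $\partial\ell_i(\lambda')/\partial\lambda_j$ at $\lambda'=\lambda_j$ (with $j\ne i$), which I would resolve through the local expansion $\ell_i(\lambda')\approx \Delta_j(\lambda'-\lambda_j)/((\lambda_j-\lambda_i)\Delta_i)$ near $\lambda'=\lambda_j$. Once $S_i$ is symmetric for every $i$, the second relation in \eqref{eq26a} follows. Keeping track of these node-evaluations of the derivatives $\partial\ell_i/\partial\lambda_j$ is the technically hardest step; the brackets $\{u,u\}$ and $\{u,v\}$ are comparatively routine.
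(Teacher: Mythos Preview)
Your proof is correct, but it diverges substantially from the paper's argument in two places.

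For $\{u(\lambda),v(\lambda')\}$, the paper does not pass through the symmetric Lagrange expression $-\sum_j g(\lambda_j)\Delta_j\,\ell_j(\lambda)\ell_j(\lambda')$ and the divided-difference identity $\sum_k P_k(x)\,y^{n-k}=(u(x)-u(y))/(x-y)$. Instead it works coefficient by coefficient: it computes $\{\rho_k,v(\lambda_j)\}=g(\lambda_j)\,\partial\rho_k/\partial\lambda_j=-g(\lambda_j)\sum_{m=0}^{k-1}\rho_m\lambda_j^{k-1-m}$, recognises this as the value at $\lambda=\lambda_j$ of the polynomial $-g(\lambda)[u(\lambda)/\lambda^{n-k+1}]_+\mod u(\lambda)$, and invokes interpolation to get $\{\rho_k,v(\lambda)\}$ directly; summing against $\lambda'^{\,n-k}$ gives the result. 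The symmetry $\{u(\lambda),v(\lambda')\}=\{u(\lambda'),v(\lambda)\}$ is obtained separately by a short direct calculation. Your route has the advantage that the symmetry is manifest, and the divided-difference identity neatly packages what the paper does termwise; the paper's route reaches the closed form in \eqref{eq26b} without any auxiliary identity.

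For $\{v(\lambda),v(\lambda')\}=0$ the contrast is sharper. The paper dispatches this in one line: since $v(\lambda_i)=g(\lambda_i)\mu_i$, one has $\{v(\lambda_i),v(\lambda_j)\}=\{g(\lambda_i)\mu_i,g(\lambda_j)\mu_j\}=0$, and interpolation in both spectral variables finishes the argument. Your computation of the $S_i(\lambda,\lambda')$ and their symmetry is correct but much heavier. What your longer route buys is that you never need the (true, but not entirely trivial) fact that substituting a phase-space coordinate $\lambda_i$ for the spectral parameter commutes with taking the bracket, i.e.\ that $\{v(\lambda),v(\lambda')\}\big|_{\lambda=\lambda_a,\lambda'=\lambda_b}=\{v(\lambda_a),v(\lambda_b)\}$; the paper uses this implicitly.
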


\begin{proof}In the proof we will use the property that a polynomial of order less than $n$
is uniquely specified by its values at $n$ distinct points. The first equality
in (\ref{eq26a}) is straightforward. The second equality follows from
\begin{gather*}
\{v(\lambda_i),v(\lambda_j)\} = \{g(\lambda_i)\mu_i,g(\lambda_j)\mu_j\} = 0\qquad \text{for} \quad i,j = 1,2,\dots,n.
\end{gather*}
For the proof of (\ref{eq26b}) note that
\begin{align*}
\{\rho_k,v(\lambda_j)\} & = \{\rho_k,g(\lambda_j)\mu_j\}
= (-1)^k \sum_{1 \leq l_1 < l_2 < \dotsb < l_k \leq n}
 \{\lambda_{l_1} \lambda_{l_2} \dotsm \lambda_{l_k},g(\lambda_j)\mu_j\} \\
& = -g(\lambda_j) \sum_{m=0}^{k-1} \lambda_j^m \rho_{k-m-1} \{\lambda_j,\mu_j\}
= -g(\lambda_j) \sum_{m=0}^{k-1} \lambda_j^m \rho_{k-m-1}
\end{align*}
is a value of the polynomial
\begin{gather*} \left(-g(\lambda) \sum_{m=0}^{k-1} \lambda^m
\rho_{k-m-1}\right) \mod u(\lambda) \qquad \text{at}\quad \lambda = \lambda_j,
\end{gather*} since
$u(\lambda_j) = 0$. Because this polynomial is of order less than $n$ we can write
\begin{gather*}
\{\rho_k,v(\lambda)\} = \left(-g(\lambda) \sum_{m=0}^{k-1} \lambda^m
 \rho_{k-m-1}\right) \mod u(\lambda)
= -g(\lambda)\left[\frac{u(\lambda)}{\lambda^{n-k+1}}\right]_+ \mod u(\lambda).
\end{gather*}
Thus
\begin{gather*}
\{u(\lambda'),v(\lambda)\} = \sum_{k=1}^n \{\rho_k,v(\lambda)\}
 \lambda'^{n-k}
= -\sum_{k=1}^n \left( g(\lambda)
 \left[\frac{u(\lambda)}{\lambda^{n-k+1}}\right]_+ \mod u(\lambda) \right)
 \lambda'^{n-k},
\end{gather*}
which proves the second equality in (\ref{eq26b}). For the proof of the first
equality in (\ref{eq26b}) we calculate that
\begin{align*}
\{u(\lambda'),v(\lambda)\} & =
 \Biggl\{ \prod_{i=1}^n (\lambda' - \lambda_i), \sum_{l=1}^n g(\lambda_l)\mu_l
 \prod_{j \neq l} \frac{\lambda - \lambda_j}{\lambda_l - \lambda_j} \Biggr\} \\
& = \sum_{l=1}^n g(\lambda_l)
 \left\{ \prod_{i=1}^n (\lambda' - \lambda_i), \mu_l \right\}
 \prod_{j \neq l} \frac{\lambda - \lambda_j}{\lambda_l - \lambda_j} \\
& = -\sum_{l=1}^n g(\lambda_l) \{\lambda_l,\mu_l\} \prod_{j \neq l}
 \frac{(\lambda' - \lambda_j)(\lambda - \lambda_j)}{\lambda_l - \lambda_j}
= \{u(\lambda),v(\lambda')\}.\tag*{\qed}
\end{align*}\renewcommand{\qed}{}
\end{proof}

\begin{Lemma}
\label{l4}
Let $q(\lambda)$ be a polynomial of order $n$ and $p(\lambda)$ a smooth
function defined on a~domain containing all roots of $q(\lambda)$, then
\begin{gather*}
\sum_{k=1}^n \lambda'^{n-k} p(\lambda) \big[\lambda^{-n+k-1} q(\lambda)
 \big]_+ \mod q(\lambda) = \sum_{k=1}^n \lambda^{n-k}
p(\lambda') \big[\lambda'^{-n+k-1} q(\lambda')\big]_+ \mod q(\lambda').
\end{gather*}
\end{Lemma}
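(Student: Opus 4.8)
The plan is to show that both sides of the claimed identity reduce to one and the same manifestly symmetric expression. Denote by $\lambda_1,\dots,\lambda_n$ the (distinct) roots of $q(\lambda)$ and abbreviate $Q_k(\lambda):=\big[\lambda^{-n+k-1}q(\lambda)\big]_+$. I claim the left-hand side equals
\[
\sum_{i=1}^n p(\lambda_i)\prod_{j\neq i}\frac{(\lambda-\lambda_j)(\lambda'-\lambda_j)}{\lambda_i-\lambda_j},
\]
which is symmetric under $\lambda\leftrightarrow\lambda'$; applying the same reduction to the right-hand side (with the roles of $\lambda$ and $\lambda'$ exchanged) produces the identical expression, and the lemma follows. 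Incidentally this generalizes the symmetry $\{u(\lambda),v(\lambda')\}=\{u(\lambda'),v(\lambda)\}$ of Lemma~\ref{l3} from the case $p=g$, $q=u$ to arbitrary $p$ and $q$.

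The first step is the purely algebraic identity
\[
\sum_{k=1}^n \lambda'^{n-k}Q_k(\lambda)=\frac{q(\lambda)-q(\lambda')}{\lambda-\lambda'}.
\]
Writing $q(\lambda)=\sum_{j=0}^n q_j\lambda^j$, the polynomial part $Q_k(\lambda)$ collects exactly those monomials $q_j\lambda^{j-n+k-1}$ with $j\geq n-k+1$. Substituting this, interchanging the summations over $j$ and $k$, and recognizing the inner sum as $\sum_{l=0}^{j-1}\lambda^{j-1-l}\lambda'^{l}=(\lambda^j-\lambda'^j)/(\lambda-\lambda')$ yields the identity once the $j=0$ term drops out. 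This is elementary bookkeeping, but it is the computational heart of the argument.

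The second step linearizes the $\mod$ operation through its values at the roots, exactly the device already used in Lemmas~\ref{l1} and~\ref{l3}. For each $k$ the function $c_k(\lambda):=p(\lambda)Q_k(\lambda)\mod q(\lambda)$ is a polynomial of order less than $n$, and since $q(\lambda_i)=0$ it satisfies $c_k(\lambda_i)=p(\lambda_i)Q_k(\lambda_i)$. As a polynomial of order less than $n$ is determined by its values at the $n$ distinct points $\lambda_i$, Lagrange interpolation gives
\[
c_k(\lambda)=\sum_{i=1}^n p(\lambda_i)Q_k(\lambda_i)\prod_{j\neq i}\frac{\lambda-\lambda_j}{\lambda_i-\lambda_j}.
\]

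It then remains to substitute this into the left-hand side, interchange the $k$- and $i$-summations, and apply the algebraic identity of the first step with $\lambda$ replaced by the root $\lambda_i$. Because $q(\lambda_i)=0$, that identity collapses to $\sum_{k=1}^n\lambda'^{n-k}Q_k(\lambda_i)=-q(\lambda')/(\lambda_i-\lambda')=\prod_{j\neq i}(\lambda'-\lambda_j)$, which turns the left-hand side precisely into the symmetric double product displayed above. The main obstacle is conceptual rather than computational: one must notice that, although $p$ is an arbitrary smooth function, the $\mod$ reduction depends on $p$ only through its values $p(\lambda_i)$ at the roots, and it is this fact that makes the interpolation representation available and exposes the symmetry. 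As throughout the paper, I take the roots of $q$ to be distinct, which is needed both for the interpolation and for $\mod q$ to be defined via Lemma~\ref{l2}; the case of repeated roots then follows by continuity in the coefficients of $q$.
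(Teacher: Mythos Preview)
Your proof is correct and reaches the same conclusion as the paper's, but by a genuinely different route. The paper first replaces $p$ by its remainder $r=p\mod q$ and shows algebraically that
\[
p(\lambda)Q_k(\lambda)\mod q(\lambda)=r(\lambda)Q_k(\lambda)-q(\lambda)\big[\lambda^{-n+k-1}r(\lambda)\big]_+,
\]
so that after summing against $\lambda'^{n-k}$ one arrives at the closed form $(q(\lambda)r(\lambda')-q(\lambda')r(\lambda))/(\lambda-\lambda')$, which is manifestly symmetric. You instead evaluate at the roots and use Lagrange interpolation to reach the equivalent symmetric form $\sum_i p(\lambda_i)\prod_{j\neq i}(\lambda-\lambda_j)(\lambda'-\lambda_j)/(\lambda_i-\lambda_j)$ (strictly speaking, times the leading coefficient of $q$, which is $1$ in the paper's applications). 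Both arguments ultimately rest on your identity $\sum_k\lambda'^{n-k}Q_k(\lambda)=(q(\lambda)-q(\lambda'))/(\lambda-\lambda')$, which the paper leaves implicit under the phrase ``expanding the polynomials''. The trade-off is that the paper's polynomial-algebra version works uniformly with no hypothesis on the roots, whereas your interpolation version is more concrete and makes the link to the symmetry $\{u(\lambda),v(\lambda')\}=\{u(\lambda'),v(\lambda)\}$ of Lemma~\ref{l3} transparent, at the cost of assuming distinct roots and invoking a continuity argument for the degenerate case.
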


\begin{proof}
We have that
\begin{gather*}
p(\lambda) \big[\lambda^{-n+k-1} q(\lambda)\big]_+ \mod q(\lambda) =
 r(\lambda) \big[\lambda^{-n+k-1} q(\lambda)\big]_+ \mod q(\lambda),
\end{gather*}
where $r(\lambda) = p(\lambda) \mod q(\lambda)$. Since
$\big[\frac{r(\lambda)}{q(\lambda)}\big]_+ = 0$ it further follows that
\begin{align*}
p(\lambda) \big[\lambda^{-n+k-1} q(\lambda)\big]_+ \mod q(\lambda) & =
 r(\lambda) \big[\lambda^{-n+k-1} q(\lambda)\big]_+
 - q(\lambda) \left[\frac{r(\lambda)}{q(\lambda)}
 \left[\lambda^{-n+k-1} q(\lambda)\right]_+ \right]_+ \\
& = r(\lambda) \big[\lambda^{-n+k-1} q(\lambda)\big]_+
 - q(\lambda) \left[\frac{r(\lambda)}{q(\lambda)} \lambda^{-n+k-1} q(\lambda)
 \right]_+ \\
& = r(\lambda) \big[\lambda^{-n+k-1} q(\lambda)\big]_+
 - q(\lambda) \big[ \lambda^{-n+k-1} r(\lambda) \big]_+.
\end{align*}
Using this equality we get
\begin{gather*}
\sum_{k=1}^n \lambda'^{n-k} p(\lambda)
 \big[\lambda^{-n+k-1} q(\lambda)\big]_+ \mod q(\lambda) \\
 \qquad{} = \sum_{k=1}^n \lambda'^{n-k} \big( r(\lambda)
 \big[\lambda^{-n+k-1} q(\lambda)\big]_+
 - q(\lambda) \big[ \lambda^{-n+k-1} r(\lambda) \big]_+ \big) \\
\qquad{} = \sum_{k=1}^n \lambda^{n-k} \big( r(\lambda')
 \big[\lambda'^{-n+k-1} q(\lambda')\big]_+
 - q(\lambda') \big[ \lambda'^{-n+k-1} r(\lambda')\big]_+ \big) \\
 \qquad{} = \sum_{k=1}^n \lambda^{n-k} p(\lambda')
 \big[\lambda'^{-n+k-1} q(\lambda')\big]_+ \mod q(\lambda'),
\end{gather*}
where the second equality is easily proven by expanding the polynomials.
\end{proof}

\begin{Lemma}\label{l5}The action of Hamiltonian vector fields $X_k = \{\,\cdot\,,H_k\}$ on $u(\lambda)$ and $v(\lambda)$ is equal
\begin{subequations}\label{eq27}
\begin{gather}
X_k u(\lambda) = -\frac{\partial F}{\partial y}(\lambda,v(\lambda)/g(\lambda))
 \left[\frac{u(\lambda)}{\lambda^{n-k+1}}\right]_+ \mod u(\lambda), \label{eq27a} \\
X_k v(\lambda) = -g(\lambda) \left[\frac{F(\lambda,v(\lambda)/g(\lambda))}{u(\lambda)}\right]_+
 \left[\frac{u(\lambda)}{\lambda^{n-k+1}}\right]_+ \mod u(\lambda), \label{eq27b}
\end{gather}
\end{subequations}
where $F(x,y) = \frac{1}{2} f(x) y^2 - \sigma(x)$.
\end{Lemma}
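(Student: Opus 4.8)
The plan is to compute $X_k u(\lambda)$ and $X_k v(\lambda)$ directly by expanding the Hamiltonian vector field $X_k = \{\,\cdot\,,H_k\}$ and exploiting the structure of the Hamiltonians $H_k$ together with the Poisson brackets established in Lemma~\ref{l3}. The crucial observation is that the defining relations~\eqref{eq2} for the $H_k$ can be differentiated: since $H_k$ are determined by requiring $\sum_{k=1}^n H_k \lambda_i^{n-k} = F(\lambda_i,\mu_i)$, and since by Lemma~\ref{l1} the polynomial $\sum_k H_k \lambda^{n-k}$ equals $F(\lambda,v(\lambda)/g(\lambda)) \mod u(\lambda)$, I can relate Poisson brackets involving $H_k$ to brackets involving $u(\lambda)$ and $v(\lambda)$, for which Lemma~\ref{l3} gives explicit formulas.

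First I would write $X_k u(\lambda') = \{u(\lambda'),H_k\}$ and use the representation of $H_k$ as a linear combination extracting coefficients from $F(\lambda,v(\lambda)/g(\lambda)) \mod u(\lambda)$. Because $u$ Poisson-commutes with itself by~\eqref{eq26a}, the only contribution comes from the dependence of $F(\lambda,v/g)$ on $v(\lambda)$, i.e.\ through $\frac{\partial F}{\partial y}$ and the bracket $\{u(\lambda'),v(\lambda)\}$. Substituting the explicit form~\eqref{eq26b} of $\{u(\lambda'),v(\lambda)\}$ and then projecting onto the coefficient of $\lambda^{n-k}$ (equivalently, reading off the $H_k$-component) should collapse the sum into the single factor $\bigl[\frac{u(\lambda)}{\lambda^{n-k+1}}\bigr]_+ \mod u(\lambda)$, yielding~\eqref{eq27a}. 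The analogous computation for $X_k v(\lambda') = \{v(\lambda'),H_k\}$ uses $\{v,v\}=0$ from~\eqref{eq26a}, so again only the $\{v(\lambda'),u(\lambda)\}$ channel survives, multiplied now by $\frac{\partial}{\partial u}$-type factor, which after using~\eqref{eq15}--\eqref{eq16} becomes $-g(\lambda)\bigl[\frac{F(\lambda,v(\lambda)/g(\lambda))}{u(\lambda)}\bigr]_+$; this gives~\eqref{eq27b}.

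The main obstacle will be bookkeeping the symmetry exchange $\lambda \leftrightarrow \lambda'$ correctly: the natural computation produces expressions in which the spectral parameter of $H_k$ (call it $\lambda$) and the spectral parameter of $u$ or $v$ (call it $\lambda'$) appear asymmetrically, and reconciling these into the clean form stated in~\eqref{eq27} requires precisely the transposition identity of Lemma~\ref{l4}. That is where Lemma~\ref{l4} earns its place: applying it with $q = u$ and an appropriate smooth $p$ (namely $p = \frac{\partial F}{\partial y}(\cdot,v/g)$ for~\eqref{eq27a} and $p$ built from $\bigl[\frac{F}{u}\bigr]_+$ for~\eqref{eq27b}) swaps the roles of the two spectral parameters so that the projection onto the $H_k$-coefficient can be performed on the correct variable. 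I expect the argument to amount to: expand the bracket, isolate the single surviving channel using~\eqref{eq26a}, insert~\eqref{eq26b}, and finally invoke Lemma~\ref{l4} to rewrite the result in the asserted symmetric-looking form.

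A subtlety worth flagging is the treatment of $\frac{\partial F}{\partial y}(\lambda,v(\lambda)/g(\lambda)) = f(\lambda)\,v(\lambda)/g(\lambda)$, which is a smooth function but not a polynomial; this is why the statement is phrased with the general $[\,\cdot\,]_+$ and $\mod$ operators of Lemma~\ref{l2} rather than ordinary polynomial division, and I would lean on the uniqueness part of that lemma whenever I need to identify two such expressions by matching their values at the $n$ distinct points $\lambda_i$. Matching at these $n$ points, as in the proofs of Lemmas~\ref{l1} and~\ref{l3}, is the recurring technical device that turns the polynomial-coefficient identities into genuine equalities of the $\mod u(\lambda)$ expressions.
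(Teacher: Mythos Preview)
Your plan is correct and matches the paper's proof essentially step for step: the paper likewise writes $\sum_k \{u(\lambda'),H_k\}\lambda^{n-k}$ and $\sum_k \{v(\lambda'),H_k\}\lambda^{n-k}$ using Lemma~\ref{l1}, reduces via~\eqref{eq26a} to a single surviving bracket channel, inserts~\eqref{eq26b}, and then invokes Lemma~\ref{l4} to swap $\lambda\leftrightarrow\lambda'$ before reading off the coefficient of $\lambda^{n-k}$. The only place where your sketch is slightly vaguer than the paper is the ``$\partial/\partial u$-type factor'' for~\eqref{eq27b}: concretely, the $u$-dependence of $F(\lambda,v/g)\mod u$ enters through the term $-u(\lambda)\bigl[\tfrac{F}{u}\bigr]_+$, and the paper computes $\tfrac{\partial}{\partial\lambda_i}\bigl(F\mod u\bigr)$ explicitly to extract this, whereupon the $\{v,v\}=0$ cancellation leaves exactly $\{u(\lambda),v(\lambda')\}\bigl[\tfrac{F}{u}\bigr]_+\mod u$ as you anticipate.
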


\begin{proof}
Using Lemmas~\ref{l1}, \ref{l3} and \ref{l4} we calculate that
\begin{align*}
\sum_{k=1}^n \{u(\lambda'),H_k\} \lambda^{n-k} & =
 \{u(\lambda'), F(\lambda,v(\lambda)/g(\lambda)) \mod u(\lambda)\} \\
& = \sum_{i=1}^n \frac{\partial u(\lambda')}{\partial \lambda_i}
 \frac{\partial F}{\partial y}(\lambda,v(\lambda)/g(\lambda)) \frac{1}{g(\lambda)}
 \frac{\partial v(\lambda)}{\partial \mu_i} \mod u(\lambda) \\
& = \{u(\lambda'),v(\lambda)\} \frac{1}{g(\lambda)}
 \frac{\partial F}{\partial y}(\lambda,v(\lambda)/g(\lambda)) \mod u(\lambda) \\
& = -\sum_{k=1}^n \left(
 \frac{\partial F}{\partial y}(\lambda,v(\lambda)/g(\lambda))
 \left[\frac{u(\lambda)}{\lambda^{n-k+1}}\right]_+ \mod u(\lambda) \right)
 \lambda'^{n-k} \\
& = -\sum_{k=1}^n \left(
 \frac{\partial F}{\partial y}(\lambda',v(\lambda')/g(\lambda'))
 \left[\frac{u(\lambda')}{\lambda'^{n-k+1}}\right]_+ \mod u(\lambda') \right) \lambda^{n-k}.
\end{align*}
By comparing the coefficients of $\lambda^{n-k}$ on the left and right hand side of the above equality we get equation (\ref{eq27a}). For the proof of~(\ref{eq27b}) we first calculate
\begin{gather*}
\frac{\partial}{\partial \lambda_i}
 \bigl( F(\lambda,v(\lambda)/g(\lambda)) \mod u(\lambda) \bigr) =
 \frac{\partial}{\partial \lambda_i} \left(F(\lambda,v(\lambda)/g(\lambda)) - u(\lambda)
 \left[\frac{F(\lambda,v(\lambda)/g(\lambda))}{u(\lambda)}\right]_+ \right) \\
\qquad{} = \frac{1}{g(\lambda)} \frac{\partial F}{\partial y}(\lambda,v(\lambda)/g(\lambda))
 \frac{\partial v(\lambda)}{\partial \lambda_i}
 - \frac{\partial u(\lambda)}{\partial \lambda_i}
 \left[\frac{F(\lambda,v(\lambda)/g(\lambda))}{u(\lambda)}\right]_+ \\
\qquad\quad {} - u(\lambda) \left[\frac{1}{g(\lambda)u(\lambda)}
 \frac{\partial F}{\partial y}(\lambda,v(\lambda)/g(\lambda))
 \frac{\partial v(\lambda)}{\partial \lambda_i}
 - \frac{F(\lambda,v(\lambda)/g(\lambda))}{u^2(\lambda)}
 \frac{\partial u(\lambda)}{\partial \lambda_i}\right]_+ \\
\qquad{} = \frac{1}{g(\lambda)} \frac{\partial F}{\partial y}(\lambda,v(\lambda)/g(\lambda))
 \frac{\partial v(\lambda)}{\partial \lambda_i} \mod u(\lambda)
 - \frac{\partial u(\lambda)}{\partial \lambda_i}
 \left[\frac{F(\lambda,v(\lambda)/g(\lambda))}{u(\lambda)}\right]_+ \\
\qquad \quad {} + u(\lambda) \left[\frac{F(\lambda,v(\lambda)/g(\lambda))}{u^2(\lambda)}
 \frac{\partial u(\lambda)}{\partial \lambda_i}\right]_+.
\end{gather*}
Since $\big[\frac{1}{u(\lambda)} \frac{\partial u(\lambda)}{\partial \lambda_i} \big]_+ = 0$ we can write
\begin{align*}
\left[\frac{1}{u(\lambda)} \frac{\partial u(\lambda)}{\partial \lambda_i}
 \frac{F(\lambda,v(\lambda)/g(\lambda))}{u(\lambda)}\right]_+ & =
 \left[\frac{1}{u(\lambda)} \frac{\partial u(\lambda)}{\partial \lambda_i}
 \left[\frac{F(\lambda,v(\lambda)/g(\lambda))}{u(\lambda)}\right]_+\right]_+ \\
& = \frac{\partial u(\lambda)}{\partial \lambda_i} \left[\frac{1}{u(\lambda)}
 \left[\frac{F(\lambda,v(\lambda)/g(\lambda))}{u(\lambda)}\right]_+\right]_+
\end{align*}
and we get
\begin{align*}
\frac{\partial}{\partial \lambda_i}
 \bigl( F(\lambda,v(\lambda)/g(\lambda)) \mod u(\lambda) \bigr) ={} &
 \frac{1}{g(\lambda)} \frac{\partial v(\lambda)}{\partial \lambda_i}
 \frac{\partial F}{\partial y}(\lambda,v(\lambda)/g(\lambda)) \mod u(\lambda) \\
 & {}- \frac{\partial u(\lambda)}{\partial \lambda_i}
 \left[\frac{F(\lambda,v(\lambda)/g(\lambda))}{u(\lambda)}\right]_+ \mod u(\lambda).
\end{align*}
Using this equality and Lemmas~\ref{l1}, \ref{l3} and \ref{l4} we have
\begin{gather*}
\sum_{k=1}^n \{v(\lambda'),H_k\} \lambda^{n-k} =
 \{v(\lambda'), F(\lambda,v(\lambda)/g(\lambda)) \mod u(\lambda)\} \\
\hphantom{\sum_{k=1}^n \{v(\lambda'),H_k\} \lambda^{n-k}}{} = \sum_{i=1}^n \Biggl(
 \frac{\partial v(\lambda')}{\partial \lambda_i}
 \frac{\partial v(\lambda)}{\partial \mu_i} \frac{1}{g(\lambda)}
 \frac{\partial F}{\partial y}(\lambda,v(\lambda)/g(\lambda)) \mod u(\lambda) \\
\hphantom{\sum_{k=1}^n \{v(\lambda'),H_k\} \lambda^{n-k}=}{} - \frac{\partial v(\lambda')}{\partial \mu_i}
 \frac{\partial v(\lambda)}{\partial \lambda_i} \frac{1}{g(\lambda)}
 \frac{\partial F}{\partial y}(\lambda,v(\lambda)/g(\lambda)) \mod u(\lambda) \\
\hphantom{\sum_{k=1}^n \{v(\lambda'),H_k\} \lambda^{n-k}=}{} + \frac{\partial v(\lambda')}{\partial \mu_i}
 \frac{\partial u(\lambda)}{\partial \lambda_i}
 \left[\frac{F(\lambda,v(\lambda)/g(\lambda))}{u(\lambda)}\right]_+ \mod u(\lambda)
 \Biggr) \\
\hphantom{\sum_{k=1}^n \{v(\lambda'),H_k\} \lambda^{n-k}}{} = \{v(\lambda'),v(\lambda)\} \frac{1}{g(\lambda)}
 \frac{\partial F}{\partial y}(\lambda,v(\lambda)/g(\lambda)) \mod u(\lambda) \\
\hphantom{\sum_{k=1}^n \{v(\lambda'),H_k\} \lambda^{n-k}=}{} + \{u(\lambda),v(\lambda')\}
 \left[\frac{F(\lambda,v(\lambda)/g(\lambda))}{u(\lambda)}\right]_+ \mod u(\lambda) \\
\hphantom{\sum_{k=1}^n \{v(\lambda'),H_k\} \lambda^{n-k}}{} = -\sum_{k=1}^n \left( g(\lambda)
 \left[\frac{F(\lambda,v(\lambda)/g(\lambda))}{u(\lambda)}\right]_+
 \left[\frac{u(\lambda)}{\lambda^{n-k+1}}\right]_+ \mod u(\lambda) \right)
 \lambda'^{n-k} \\
\hphantom{\sum_{k=1}^n \{v(\lambda'),H_k\} \lambda^{n-k}}{} = -\sum_{k=1}^n \left( g(\lambda')
 \left[\frac{F(\lambda',v(\lambda')/g(\lambda'))}{u(\lambda')}\right]_+
 \left[\frac{u(\lambda')}{\lambda'^{n-k+1}}\right]_+ \mod u(\lambda') \right)
 \lambda^{n-k}.
\end{gather*}
This proves equation (\ref{eq27b}).
\end{proof}

\begin{proof}[Proof of (\ref{eq18})]
The equations (\ref{eq27}) can be rewritten in the form
\begin{subequations}\label{eq28}
\begin{gather}
X_k u(\lambda) = -\frac{f(\lambda)}{g(\lambda)} v(\lambda)
 \left[\frac{u(\lambda)}{\lambda^{n-k+1}}\right]_+
 + u(\lambda) \left[ \frac{f(\lambda)}{g(\lambda)} \frac{v(\lambda)}{u(\lambda)}
 \left[\frac{u(\lambda)}{\lambda^{n-k+1}}\right]_+ \right]_+, \\
X_k v(\lambda) = \frac{1}{2}\frac{f(\lambda)}{g(\lambda)} w(\lambda)
 \left[\frac{u(\lambda)}{\lambda^{n-k+1}}\right]_+
 - \frac{1}{2} u(\lambda) \left[ \frac{f(\lambda)}{g(\lambda)} \frac{w(\lambda)}{u(\lambda)}
 \left[\frac{u(\lambda)}{\lambda^{n-k+1}}\right]_+ \right]_+.
\end{gather}
\end{subequations}
We can now compute $X_k w(\lambda)$
\begin{align*}
X_k w(\lambda) & = -2\frac{g^2(\lambda)}{f(\lambda)} X_k
 \left[\frac{F(\lambda,v(\lambda)/g(\lambda))}{u(\lambda)}\right]_+ \\
& = -2\frac{g^2(\lambda)}{f(\lambda)} \left[ \frac{f(\lambda)}{g^2(\lambda)} \frac{v(\lambda)}{u(\lambda)}
 X_k v(\lambda) \right]_+
 + 2\frac{g^2(\lambda)}{f(\lambda)} \left[ \frac{F(\lambda,v(\lambda)/g(\lambda))}{u(\lambda)}
 \frac{X_k u(\lambda)}{u(\lambda)} \right]_+.
\end{align*}
Since $\big[\frac{X_k u(\lambda)}{u(\lambda)}\big]_+ = 0$ we can write
\begin{align}
\left[ \frac{F(\lambda,v(\lambda)/g(\lambda))}{u(\lambda)}
 \frac{X_k u(\lambda)}{u(\lambda)} \right]_+ & =
 \left[ \left[\frac{F(\lambda,v(\lambda)/g(\lambda))}{u(\lambda)}\right]_+
 \frac{X_k u(\lambda)}{u(\lambda)} \right]_+ \nonumber \\
& = -\frac{1}{2}\left[ \frac{f(\lambda)}{g^2(\lambda)} \frac{w(\lambda)}{u(\lambda)} X_k u(\lambda) \right]_+.\label{eq29}
\end{align}
By (\ref{eq28}) and (\ref{eq29}) we have
\begin{align}
X_k w(\lambda) ={} &-2\frac{g^2(\lambda)}{f(\lambda)} \left[\frac{f(\lambda)}{g^2(\lambda)}\frac{v(\lambda)}{u(\lambda)}
 X_k v(\lambda)\right]_+
 - \frac{g^2(\lambda)}{f(\lambda)} \left[ \frac{f(\lambda)}{g^2(\lambda)} \frac{w(\lambda)}{u(\lambda)}
 X_k u(\lambda) \right]_+ \nonumber \\
={}& \frac{g^2(\lambda)}{f(\lambda)} \left[ \frac{f(\lambda)}{g^2(\lambda)} v(\lambda) \left[ \frac{f(\lambda)}{g(\lambda)}
 \frac{w(\lambda)}{u(\lambda)}
 \left[\frac{u(\lambda)}{\lambda^{n-k+1}}\right]_+ \right]_+ \right]_+ \nonumber \\
& {} - \frac{g^2(\lambda)}{f(\lambda)} \left[ \frac{f(\lambda)}{g^2(\lambda)} w(\lambda) \left[ \frac{f(\lambda)}{g(\lambda)}
 \frac{v(\lambda)}{u(\lambda)}
 \left[\frac{u(\lambda)}{\lambda^{n-k+1}}\right]_+ \right]_+ \right]_+
 \nonumber \\
={} & -w(\lambda) \left[ \frac{f(\lambda)}{g(\lambda)} \frac{v(\lambda)}{u(\lambda)}
 \left[\frac{u(\lambda)}{\lambda^{n-k+1}}\right]_+ \right]_+
 + v(\lambda) \left[ \frac{f(\lambda)}{g(\lambda)} \frac{w(\lambda)}{u(\lambda)}
 \left[\frac{u(\lambda)}{\lambda^{n-k+1}}\right]_+ \right]_+.\label{eq30}
\end{align}
From (\ref{eq28}) and (\ref{eq30}) we get
\begin{gather*}
\frac{{\rm d}}{{\rm d}t_k} L(\lambda) = X_k L(\lambda) = [U_k(\lambda),L(\lambda)].\tag*{\qed}
\end{gather*}\renewcommand{\qed}{}
\end{proof}

\section{Lax representation in Vi\`{e}te coordinates}\label{sec5}
Separation coordinates $(\lambda_i,\mu_i)_{i=1,\dots,n}$ are important from the point of view of integrability of considered systems, but not practical for our purpose, as for $n > 2$, matrix elements of any Lax pair $(L,U_k)$ contain in any case a complicated rational functions. So, let us express considered systems and their Lax representations in so called Vi\`{e}te coordinates
\begin{gather}
q_i = \rho_i(\lambda), \qquad p_i = -\sum_{k=1}^n \frac{\lambda_k^{n-i} \mu_k}{\Delta_k},
\qquad i=1,\dots,n.\label{eq31}
\end{gather}
The advantage of such coordinates relies on the fact that for $\sigma(\lambda)$,
$f(\lambda)$ and $g(\lambda)$ of polynomial form, all Hamiltonians and Lax
matrix elements are polynomial functions of Vi\`{e}te coordinates as well. Here,
just for the simplicity of formulas, we will consider the particular class of
systems where $f(\lambda) = \lambda^m$, $g(\lambda) = \lambda^r$ and
$\sigma(\lambda) = \lambda^\gamma$, $m,r,\gamma \in \mathbb{Z}$.

The Hamiltonians (\ref{eq5}) take the form
\begin{gather*}
H_j = \sum_{i,k} (K_j G_m)^{ik} p_j p_k + V_j^{(\gamma)}(q),
\end{gather*}
where
\begin{gather*}
(G_m)^{ik} = -\sum_{l=0}^{k-1} q_{k-l-1} V_i^{(m+l)}(q), \qquad
(K_j)_k^i = -\sum_{l=0}^{j-1} q_{l-j-1} V_i^{(n+l-k)}(q)
\end{gather*}
and basic potentials $V^{(\gamma)}$ in $q$ coordinates are generated by the
recursion matrix
\begin{gather*}
R =
\begin{pmatrix}
-q_1 & 1 & 0 & 0 \\
\vdots & 0 & \ddots & 0 \\
\vdots & 0 & 0 & 1 \\
-q_n & 0 & 0 & 0
\end{pmatrix}
, \qquad R^{-1} =
\begin{pmatrix}
0 & 0 & 0 & -\frac{1}{q_n} \\
1 & 0 & 0 & \vdots \\
0 & \ddots & 0 & \vdots \\
0 & 0 & 1 & -\frac{q_{n-1}}{q_n}
\end{pmatrix}
.
\end{gather*}

For Lax representation, we get immediately
\begin{gather}
u(\lambda;q) = \sum_{k=0}^n q_k \lambda^{n-k}, \qquad q_0 \equiv 1.
\label{eq32}
\end{gather}
From (\ref{eq8}) and (\ref{eq31}) it follows that
\begin{gather*}
\sum_{k=1}^n \frac{\lambda_k^{n+r-i} \mu_k}{\Delta_k}
= -\sum_{j=1}^n V_j^{(n+r-i)} \sum_{k=1}^n \frac{\lambda_k^{n-j} \mu_k}{\Delta_k}
= \sum_{j=1}^n V_j^{(n+r-i)} p_j.
\end{gather*}
So,
\begin{align}
v(\lambda;q,p) & = -\sum_{k=1}^n \left( \sum_{i=1}^n
 \frac{\partial \rho_k}{\partial \lambda_i}
 \frac{\lambda_i^r \mu_i}{\Delta_i} \right) \lambda^{n-k} = \sum_{k=1}^n \left[ \sum_{i=1}^n \left( \sum_{s=0}^{k-1} \rho_s
 \frac{\lambda_i^{r+k-s-1} \mu_i}{\Delta_i}\right) \right] \lambda^{n-k}
\nonumber \\
& = \sum_{k=1}^n \left[ \sum_{s=0}^{k-1} q_s \left( \sum_{j=1}^n V_j^{(r+k-s-1)}
 p_j \right) \right] \lambda^{n-k},
\label{eq33}
\end{align}
where we used the identity
\begin{gather*}
\frac{\partial \rho_k}{\partial \lambda_i} =
 -\sum_{s=0}^{k-1} \rho_s \lambda_i^{k-s-1}.
\end{gather*}
Notice that in particular for $r = 0$
\begin{gather*}
v(\lambda;q,p) = -\sum_{k=1}^n \left[ \sum_{j=0}^{k-1} q_{k-j-1} p_{n-j} \right]
 \lambda^{n-k}.
\end{gather*}

Thus, the substitutions (\ref{eq32}), (\ref{eq33}), (\ref{eq16}) and (\ref{eq19})
in $L(\lambda;q,p)$ and $U_k(\lambda;q,p)$ lead to Lax equations (\ref{eq18}),
for $f(\lambda) = \lambda^m$, $g(\lambda) = \lambda^r$, written in canonical
$(q,p)$ coordinates.

Besides, when $f(\lambda)$ is a polynomial of order less or equal $n$, the contravariant metric tensor
\begin{gather*}
G = \diag \left( \frac{f(\lambda_1)}{\Delta_1},\dots,\frac{f(\lambda_n)}{\Delta_n}\right)
\end{gather*}
defined by $E_1$ in (\ref{eq7}) is flat, so one can pass from Vi\`{e}te
coordinates to various admissible flat coordinates \cite{Marciniak2015}.

\section{Examples}\label{sec6}

\begin{Example}Our first example is a system described by a separation curve of the canonical form
\begin{gather*}
\lambda ^{5}+H_{1}\lambda ^{2}+H_{2}\lambda +H_{3}=\frac{1}{2}\mu ^{2},
\end{gather*}
i.e., $n=3$ and $f(\lambda )=1$. This is the case for which there exist flat coordinates \cite{Blaszak2007}, related to Vi\`{e}te coordinates by
\begin{gather*}
q_{1}=x_{1},\qquad q_{2}=x_{2}+\frac{1}{4}x_{1}^{2},\qquad q_{3}=x_{3}+\frac{1}{2}x_{1}x_{2}, \\
p_{1}=y_{1}-\frac{1}{2}x_{1}y_{2}+\left( \frac{1}{4}x_{1}^{2}-\frac{1}{2}x_{2}\right) y_{3},\qquad p_{2}=y_{2}-\frac{1}{2}x_{1}y_{3},\qquad p_{3}=y_{3}.
\end{gather*}
In flat coordinates Hamiltonians are
\begin{gather*}
H_{1} =\frac{1}{2}y_{2}^{2}+y_{1}y_{3}+\frac{1}{2}x_{1}^{3}-\frac{3}{2}%
x_{1}x_{2}+x_{3}, \\
H_{2} =y_{1}y_{2}+\frac{1}{2}x_{1}y_{2}^{2}-\frac{1}{2}x_{3}y_{3}^{2}+\frac{%
1}{2}x_{1}y_{1}y_{3}-\frac{1}{2}x_{2}y_{2}y_{3}+\frac{3}{16}%
x_{1}^{4}-x_{1}x_{3}-x_{2}^{2}, \\
H_{3} =\frac{1}{2}y_{1}^{2}+\frac{1}{8}x_{1}^{2}y_{2}^{2}+\frac{1}{8}%
x_{2}^{2}y_{3}^{2}+\frac{1}{2}x_{1}y_{1}y_{2}+\frac{1}{2}x_{2}y_{1}y_{3}-%
\left( \frac{1}{4}x_{1}x_{2}+x_{3}\right) y_{2}y_{3} \\
\hphantom{H_{3}=}{}+\frac{3}{4}x_{1}^{2}x_{3}+\frac{3}{8}x_{1}^{3}x_{2}-x_{2}x_{3}-%
\frac{1}{2}x_{1}x_{2}^{2}
\end{gather*}
and Lax representation for $g(\lambda )=1$ takes the form
\begin{gather*}
L=
\begin{pmatrix}
\begin{gathered} -y_{3}\lambda ^{2}-\left( y_{2}+\tfrac{1}{2}x_{1}y_{3}\right) \lambda\vspace{1mm}\\{} -y_{1}-%
\tfrac{1}{2}x_{1}y_{2}-\tfrac{1}{2}x_{2}y_{3} \end{gathered} & \begin{gathered} \lambda ^{3}+x_{1}\lambda
^{2}+\left( \tfrac{1}{4}x_{1}^{2}+x_{2}\right) \lambda\vspace{1mm}\\{} +x_{3}+\tfrac{1}{2}%
x_{1}x_{2} \end{gathered} \vspace{3mm}\\
\begin{gathered} 2\lambda ^{2}-\big(y_{3}^{2}+2x_{1}\big)\lambda\vspace{1mm}\\{} -2y_{2}y_{3}+\tfrac{3}{2}%
x_{1}^{2}-2x_{2} \end{gathered} & \begin{gathered} y_{3}\lambda ^{2}+\left( y_{2}+\tfrac{1}{2}%
x_{1}y_{3}\right) \lambda\vspace{1mm}\\{} +y_{1}+\tfrac{1}{2}x_{1}y_{2}+\tfrac{1}{2}x_{2}y_{3} \end{gathered}%
\end{pmatrix},\\
U_{1}=
\begin{pmatrix}
0 & \frac{1}{2} \vspace{1mm}\\
0 & 0%
\end{pmatrix},\quad U_{2}=\begin{pmatrix}
-\frac{1}{2}y_{3} & \frac{1}{2}\lambda +\frac{1}{2}x_{1} \vspace{1mm}\\
1 & \frac{1}{2}y_{3}%
\end{pmatrix},\\
U_{3}=%
\begin{pmatrix}
-\frac{1}{2}y_{3}\lambda -\frac{1}{2}y_{2}-\frac{1}{4}x_{1}y_{3} & \frac{1}{2}\lambda ^{2}+\frac{1}{2}x_{1}\lambda +\frac{1}{8}x_{1}^{2}+\frac{1}{2}x_{2}
\vspace{1mm}\\
\lambda -\frac{1}{2}y_{3}^{2}-x_{1} & \frac{1}{2}y_{3}\lambda +\frac{1}{2}y_{2}+\frac{1}{4}x_{1}y_{3}%
\end{pmatrix}.
\end{gather*}
\end{Example}

\begin{Example}Our second example is a system described by a separation curve of the canonical form
\begin{gather*}
H_{1}\lambda +H_{2}=\frac{1}{2}\lambda \mu ^{2}+\lambda ^{4},
\end{gather*}%
i.e., $n=2$ and $f(\lambda )=\lambda $. This is one of the integrable cases of the H\`{e}non--Heiles system. Actually, in Cartesian coordinates, related to Vi\`{e}te coordinates by
\begin{gather*}
q_{1}=-x_{1},\qquad q_{2}=-\frac{1}{4}x_{2}^{2},\qquad p_{1}=-y_{1},\qquad p_{2}=-\frac{2y_{2}}{x_{2}},
\end{gather*}
both Hamiltonians are
\begin{gather*}
H_{1} =\frac{1}{2}y_{1}^{2}+\frac{1}{2}y_{2}^{2}+x_{1}^{3}+\frac{1}{2}%
x_{1}x_{2}^{2}, \\
H_{2} =\frac{1}{2}x_{2}y_{1}y_{2}-\frac{1}{2}x_{1}y_{2}^{2}+\frac{1}{4}%
x_{1}^{2}x_{2}^{2}+\frac{1}{16}x_{2}^{4}.
\end{gather*}
Lax representation for $g(\lambda )=1$ takes the form
\begin{gather*}
L=
\begin{pmatrix}
\frac{2y_{2}}{x_{2}}\lambda +y_{1}-\frac{2x_{1}y_{2}}{x_{2}} & \lambda
^{2}-x_{1}\lambda -\frac{1}{4}x_{2}^{2} \vspace{1mm}\\
-2\lambda -\left( \frac{4y_{2}^{2}}{x_{2}^{2}}+2x_{1}\right) +\left( \frac{%
4x_{1}y_{2}^{2}}{x_{2}^{2}}-\frac{4y_{1}y_{2}}{x_{2}}-2x_{1}^{2}-\frac{1}{2}%
x_{2}^{2}\right) \lambda ^{-1} & -\frac{2y_{2}}{x_{2}}\lambda -y_{1}+\frac{%
2x_{1}y_{2}}{x_{2}}
\end{pmatrix},\\
U_{1}=%
\begin{pmatrix}
\frac{y_{2}}{x_{2}} & \frac{1}{2}\lambda \vspace{1mm}\\
-1 & -\frac{y_{2}}{x_{2}}%
\end{pmatrix}%
,\qquad U_{2}=%
\begin{pmatrix}
\frac{y_{2}}{x_{2}}\lambda -\frac{x_{1}y_{2}}{x_{2}}+\frac{1}{2}y_{1} &
\frac{1}{2}\lambda ^{2}-\frac{1}{2}x_{1}\lambda \vspace{1mm}\\
-\lambda -\frac{2y_{2}^{2}}{x_{2}^{2}}-x_{1} & -\frac{y_{2}}{x_{2}}\lambda +%
\frac{x_{1}y_{2}}{x_{2}}-\frac{1}{2}y_{1}%
\end{pmatrix}.
\end{gather*}
Lax representation for $g(\lambda )=\lambda $ is
\begin{gather*}
L=\begin{pmatrix}
y_{1}\lambda +\frac{1}{2}x_{2}y_{2} & \lambda ^{2}-x_{1}\lambda -\frac{1}{4}x_{2}^{2} \vspace{1mm}\\
-2\lambda ^{3}-2x_{1}\lambda ^{2}-\left( 2x_{1}^{2}+\frac{1}{2}%
x_{2}^{2}\right) \lambda +y_{2}^{2} & -y_{1}\lambda -\frac{1}{2}x_{2}y_{2}
\end{pmatrix},\\
U_{1}=\begin{pmatrix}
0 & \frac{1}{2} \vspace{1mm}\\
-\lambda -2x_{1} & 0
\end{pmatrix}
,\qquad U_{2}=
\begin{pmatrix}
\frac{1}{2}y_{1} & \frac{1}{2}\lambda -\frac{1}{2}x_{1} \vspace{1mm}\\
-\lambda ^{2}-x_{1}\lambda -x_{1}^{2}-\frac{1}{2}x_{2}^{2} & -\frac{1}{2}%
y_{1}%
\end{pmatrix},
\end{gather*}
while Lax representation for $g(\lambda )=\lambda ^{2}$ is of the form
\begin{gather*}
L=
\begin{pmatrix}
\big(x_{1}y_{1}+\frac{1}{2}x_{2}y_{2}\big)\lambda +\frac{1}{4}x_{2}^{2}y_{1} &
\lambda ^{2}-x_{1}\lambda -\frac{1}{4}x_{2}^{2} \vspace{3mm}\\
\begin{gathered} -2\lambda ^{5}-2x_{1}\lambda ^{4}-\left( 2x_{1}^{2}+\tfrac{1}{2}%
x_{2}^{2}\right) \lambda ^{3}+\big(y_{1}^{2}+y_{2}^{2}\big)\lambda^{2}\vspace{1mm}\\
{}+y_{1}(x_{1}y_{1}+x_{2}y_{2})\lambda +\tfrac{1}{2}x_{1}y_{1}^{2} \end{gathered} &
-\big(x_{1}y_{1}+\frac{1}{2}x_{2}y_{2}\big)\lambda -\frac{1}{4}x_{2}^{2}y_{1}
\end{pmatrix},\\
U_{1}=
\begin{pmatrix}
-\frac{1}{2}y_{1}\lambda ^{-1} & \frac{1}{2}\lambda ^{-1} \vspace{1mm}\\
-\lambda ^{2}-2x_{1}\lambda -\big(3x_{1}^{2}+\frac{1}{2}x_{2}^{2}\big)-\frac{1}{2}%
y_{1}^{2}\lambda ^{-1} & \frac{1}{2}y_{1}\lambda ^{-1}
\end{pmatrix},\\
U_{2}=
\begin{pmatrix}
\frac{1}{2}x_{1}y_{1}\lambda ^{-1} & \frac{1}{2}-\frac{1}{2}x_{1}\lambda^{-1} \vspace{1mm}\\
\lambda ^{3}-x_{1}\lambda ^{2}-\big(x_{1}^{2}+\frac{1}{2}x_{2}^{2}\big)\lambda +
\frac{1}{2}\big(y_{1}^{2}+y_{2}^{2}-x_{1}x_{2}^{2}\big)+\frac{1}{2}
x_{1}y_{1}^{2}\lambda ^{-1} & -\frac{1}{2}x_{1}y_{1}\lambda ^{-1}
\end{pmatrix}.
\end{gather*}

Lax representations for $g(\lambda)=1$ and $g(\lambda)=\lambda^2$ are new one, at least to the knowledge of authors, while that for $g(\lambda)=\lambda $ is well known (see for example~\cite{Ravoson1993} or~\cite{Rauch-Wojciechowski1996}).
\end{Example}

\begin{Example}Our last example is a system described by a separation curve of the
canonical form
\begin{gather*}
\lambda ^{-2}+H_{1}\lambda +H_{2}=\frac{1}{2}\lambda ^{-1}\mu ^{2},
\end{gather*}%
i.e., $n=2$ and $f(\lambda )=\lambda ^{-1}$. Contrary to the previous cases
the metric defined by $H_{1}$ is non-flat. In Vi\`{e}te coordinates
\begin{gather*}
H_{1}=-\frac{1}{2}\frac{p_{1}^{2}}{q_{2}}-\frac{q_{1}p_{1}p_{2}}{q_{2}}+\frac{1}{2}\left( 1-\frac{q_{1}^{2}}{q_{2}}\right) p_{2}^{2}-\frac{q_{1}}{q_{2}^{2}},\\
H_{2}=-\frac{1}{2}\frac{q_{1}p_{1}^{2}}{q_{2}}+\left( 1-\frac{q_{1}^{2}}{q_{2}}\right) p_{1}p_{2}+\left( q_{1}-\frac{1}{2}\frac{q_{1}^{3}}{q_{2}}\right) p_{2}^{2}+\frac{1}{q_{2}}-\frac{q_{1}^{2}}{q_{2}^{2}}.
\end{gather*}
Then, the Lax representation for $g(\lambda )=1$ takes the form
\begin{gather*}
L=\begin{pmatrix}
-p_{2}\lambda -p_{1}-q_{1}p_{2} & \lambda ^{2}+q_{1}\lambda +q_{2} \vspace{1mm}\\
-\frac{(p_{1}+q_{1}p_{2})^{2}}{q_{2}}-\frac{2q_{1}}{q_{2}^{2}}+\frac{2}{q_{2}%
}\lambda ^{-1} & p_{2}\lambda +p_{1}+q_{1}p_{2}%
\end{pmatrix},
\\
U_{1}=
\begin{pmatrix}
-\frac{1}{2}\frac{p_{1}+q_{1}p_{2}}{q_{2}}\lambda ^{-1} & \frac{1}{2}+\frac{1%
}{2}q_{1}\lambda ^{-1} \vspace{1mm}\\
-\left( \frac{1}{2}\frac{(p_{1}+q_{1}p_{2})^{2}}{q_{2}^{2}}+\frac{2q_{1}}{%
q_{2}^{3}}\right) \lambda ^{-1}+\frac{1}{q_{2}^{2}}\lambda ^{-2} & \frac{1}{2%
}\frac{p_{1}+q_{1}p_{2}}{q_{2}}\lambda ^{-1}%
\end{pmatrix},
\\
U_{2}=
\begin{pmatrix}
-\frac{1}{2}\frac{q_{1}(p_{1}+q_{1}p_{2})}{q_{2}}\lambda ^{-1} & \frac{1}{2}%
\lambda ^{-1} \vspace{1mm}\\
-\left( \frac{1}{2}\frac{(p_{1}+q_{1}p_{2})^{2}-2}{q_{2}^{2}}+\frac{%
2q_{1}^{2}}{q_{2}^{3}}\right) \lambda ^{-1}+\frac{q_{1}}{q_{2}^{2}}\lambda
^{-2} & \frac{1}{2}\frac{q_{1}(p_{1}+q_{1}p_{2})}{q_{2}}\lambda ^{-1}%
\end{pmatrix},
\end{gather*}
for $g(\lambda )=\lambda ^{-1}$ we have
\begin{gather*}
L=
\begin{pmatrix}
\frac{p_{1}+q_{1}p_{2}}{q_{2}}\lambda +\frac{q_{1}(p_{1}+q_{1}p_{2})}{q_{2}}%
-p_{2} & \lambda ^{2}+q_{1}\lambda +q_{2} \vspace{3mm}\\
\begin{gathered} -\tfrac{(p_{1}+q_{1}p_{2})^{2}}{q_{2}^{2}}-\left( \tfrac{%
q_{1}(p_{1}+q_{1}p_{2})^{2}}{q_{2}^{2}}-\tfrac{2(p_{1}p_{2}+q_{1}p_{2}^{2})}{%
q_{2}}\right) \lambda ^{-1}\vspace{1mm}\\{}-\tfrac{2q_{1}}{q_{2}^{2}}\lambda ^{-2}+\tfrac{2}{%
q_{2}}\lambda ^{-3} \end{gathered} & -\frac{p_{1}+q_{1}p_{2}}{q_{2}}\lambda -\frac{%
q_{1}(p_{1}+q_{1}p_{2})}{q_{2}}+p_{2}%
\end{pmatrix},\\
U_{1}=
\begin{pmatrix}
0 & \frac{1}{2} \vspace{1mm}\\
\left( \frac{p_{1}p_{2}+q_{1}p_{2}^{2}}{q_{2}^{2}}-\frac{1}{2}\frac{%
q_{1}(p_{1}+q_{1}p_{2})^{2}+2}{q_{2}^{3}}+\frac{2q_{1}^{2}}{q_{2}^{4}}%
\right) \lambda ^{-1}-\frac{2q_{1}}{q_{2}^{3}}\lambda ^{-2}+\frac{1}{%
q_{2}^{2}}\lambda ^{-3} & 0
\end{pmatrix},
\\
U_{2}=
\begin{pmatrix}
\frac{1}{2}\frac{p_{1}+q_{1}p_{2}}{q_{2}} & \frac{1}{2}\lambda +\frac{1}{2}%
q_{1} \vspace{3mm}\\
\begin{gathered} \left( \tfrac{q_{1}p_{2}(p_{1}+q_{1}p_{2})}{q_{2}^{2}}-\tfrac{1}{2}\tfrac{%
q_{1}^{2}(p_{1}+q_{1}p_{2})^{2}+6q_{1}}{q_{2}^{3}}+\tfrac{2q_{1}^{3}}{%
q_{2}^{4}}\right) \lambda ^{-1}\vspace{1mm}\\{}+\left( \tfrac{1}{q_{2}^{2}}-\tfrac{2q_{1}^{2}}{%
q_{2}^{3}}\right) \lambda ^{-2}+\tfrac{q_{1}}{q_{2}^{2}}\lambda ^{-3} \end{gathered} & -%
\frac{1}{2}\frac{p_{1}+q_{1}p_{2}}{q_{2}}
\end{pmatrix},
\end{gather*}
and for $g(\lambda )=\lambda $
\begin{gather*}
L=
\begin{pmatrix}
-p_{1}\lambda +q_{2}p_{2} & \lambda ^{2}+q_{1}\lambda +q_{2} \vspace{1mm}\\
-\left( \frac{(p_{1}+q_{1}p_{2})^{2}}{q_{2}}-p_{2}^{2}+\frac{2q_{1}}{%
q_{2}^{2}}\right) \lambda ^{2}+\left( 2p_{1}p_{2}+q_{1}p_{2}^{2}+\frac{2}{%
q_{2}}\right) \lambda -q_{2}p_{2}^{2} & p_{1}\lambda -q_{2}p_{2}%
\end{pmatrix},
\\
U_{1}=
\begin{pmatrix}
\frac{1}{2}p_{2}\lambda ^{-2}-\frac{1}{2}\frac{p_{1}+q_{1}p_{2}}{q_{2}}%
\lambda ^{-1} & \frac{1}{2}q_{1}\lambda ^{-2}+\frac{1}{2}\lambda ^{-1} \vspace{1mm}\\
\left( \frac{p_{1}p_{2}+q_{1}p_{2}^{2}}{q_{2}}+\frac{1}{q_{2}^{2}}\right)
\lambda ^{-1}-\frac{1}{2}p_{2}^{2}\lambda ^{-2} & -\frac{1}{2}p_{2}\lambda
^{-2}+\frac{1}{2}\frac{p_{1}+q_{1}p_{2}}{q_{2}}\lambda ^{-1}%
\end{pmatrix},
\\
U_{2}=
\begin{pmatrix}
\frac{1}{2}q_{2}p_{2}\lambda ^{-2}-\frac{1}{2}\frac{%
q_{1}p_{1}-q_{2}p_{2}+q_{1}^{2}p_{2}}{q_{2}}\lambda ^{-1} & \frac{1}{2}%
\lambda ^{-2} \vspace{1mm}\\
\left( \frac{1}{2}\frac{p_{2}(2q_{1}p_{1}+2q_{1}^{2}p_{2}-q_{2}p_{2})}{q_{2}}%
+\frac{q_{1}}{q_{2}^{2}}\right) \lambda ^{-1}-\frac{1}{2}q_{1}p_{2}^{2}%
\lambda ^{-2} & -\frac{1}{2}q_{2}p_{2}\lambda ^{-2}+\frac{1}{2}\frac{%
q_{1}p_{1}-q_{2}p_{2}+q_{1}^{2}p_{2}}{q_{2}}\lambda ^{-1}%
\end{pmatrix}.
\end{gather*}
\end{Example}

\section{Final comments}\label{sec7}
The main result of the article is a systematic construction of a family of
non-equivalent Lax representations, parameterized by smooth functions
$g(\lambda)$ of spectral parameter, for arbitrary Liouville integrable system
defined by separation curve from the class (\ref{eq4}). We presented in explicit
form the Lax matrices $(L,U)$ in separation coordinates $(\lambda,\mu)$
for arbitrary $g(\lambda )$ and in so called Vi\`{e}te coordinates for
$g(\lambda) = \lambda^{r}$. It is really astonishing result that any
dynamical system (\ref{eq6}) has so large set of non-equivalent Lax
representations. It is still not clear for us what is the geometric meaning
of such freedom for the Lax pairs.

In all presented examples, our choice of $g(\lambda)$ was determined only
by the simplicity of formulas. In principle there is no obstacles to apply
more complex form of $g(\lambda)$, but then all formulas extremely complicate.
We did it for the case of two degrees of freedom in Section \ref{sec3}.

It is well known that the knowledge of Lax representation for a given
dynamical system allows to construct constants of motion and separation
coordinates. So why to study Lax pairs when we start from Liouville
integrable and separable system? We were motivated by at least two important
reasons. The first one was related with the investigation of the problem of
admissible structure of Lax pairs. It can be done in a systematic way just
in a separation coordinates and led us to the large family of non-equivalent
Lax representations. The second one is closely related with our further
research program. Actually, it relies on deformation of autonomous Liouville
integrable systems, with isospectral Lax representation, to non-autonomous
Frobenius integrable systems (Painlev\'{e} systems in particular), with
respective isomonodromic Lax representation \cite{Blaszak2019}. In order to
construct the isomonodromic Lax representation of deformed system the complete
knowledge on related isospectral Lax representation is necessary. The work is
in progress.

\subsection*{Acknowledgements}
We would like to thank the referees for many useful suggestions and constructive
criticisms concerning the first version of this article.
Z.~Doma\'nski has been partially supported by the grant 04/43/DSPB/0094 from
the Polish Ministry of Science and Higher Education.

\pdfbookmark[1]{References}{ref}
\LastPageEnding

\end{document}